\newcommand{\widesim}[2][1.5]{\mathrel{\overset{#2}{\scalebox{#1}[1]{$\sim$}}}}
\newcommand{\Rmnum}[1]{\expandafter\@slowromancap\romannumeral #1@}
\begin{document}

\title{Local unitary classification for sets of generalized Bell states}

\author{Bujiao Wu}
\email{wubujiao@ict.ac.cn}
\author{Jiaqing Jiang}
\email{jiangjiaqing17@mails.ucas.ac.cn}
\author{Jialin Zhang}
\email{zhangjialin@ict.ac.cn}
\author{Guojing Tian}
\email{tianguojing@ict.ac.cn}
\author{Xiaoming Sun}
\email{sunxiaoming@ict.ac.cn}
\affiliation{
  CAS Key Lab of Network Data Science and Technology, Institute of Computing Technology, Chinese Academy of Sciences, 100190, Beijing, China.
}
\affiliation{
  University of Chinese Academy of Sciences, Beijing, 100049, China.
}

\begin{abstract}
In this paper, we study the local unitary classification for pairs (triples) of generalized Bell states, based on the local unitary equivalence of two sets. In detail, we firstly introduce some general unitary operators which give us more local unitary equivalent sets besides Clifford operators.
And then we present two necessary conditions for local unitary equivalent sets which can be used to examine the local inequivalence. Following this approach, we completely classify all of pairs in $d\otimes d$ quantum system into $\prod_{j=1}^{n} (k_{j} + 1) $ LU-inequivalent pairs when the prime factorization of $d=\prod_{j=1}^{n}p_j^{k_j}$. Moreover, all of triples in $p^\alpha\otimes p^\alpha$ quantum system for prime $p$ can be partitioned into $\frac{(\alpha + 3)}{6}p^{\alpha} + O(\alpha p^{\alpha-1})$ LU-inequivalent triples, especially, when $\alpha=2$ and $p>2$, there are exactly $\lfloor \frac{5}{6}p^{2}\rfloor + \lfloor \frac{p-2}{6}+(-1)^{\lfloor\frac{p}{3}\rfloor}\frac{p}{3}\rfloor + 3$ LU-inequivalent triples.
\begin{description}
\item[PACS numbers] 03.67.HK, 03.65.Ud.
\end{description}
\end{abstract}

\pacs{Valid PACS appear here}
\maketitle

\theoremstyle{remark}
\newtheorem{definition}{\indent{\bf Definition}}
\newtheorem*{observation}{\indent{\bf Observation}}
\newtheorem{lemma}{\indent{\bf Lemma}}
\newtheorem{theorem}{\indent{\bf Theorem}}
\newtheorem*{corollary}{\indent{\bf Corollary}}
\newtheorem*{conjecture}{\indent{\bf Conjecture}}

\newtheorem*{proposition}{\indent\em Proposition} 
\newtheorem{claim}{\indent{\bf\em Claim}}
\def\QEDclosed{\mbox{\rule[0pt]{1.3ex}{1.3ex}}}
\def\QED{\QEDclosed}
\def\proof{\indent{\bf\em Proof}.}
\def\endproof{\hspace*{\fill}~\QED\par\endtrivlist\unskip}

\section{\label{sec:level1}Introduction}

As is known, two local unitary (LU) equivalent quantum states play the same role in implementing quantum information processing tasks, and many fundamental properties including the maximal violations of Bell inequalities \cite{RH95,RFW01,MZ02,ML12}, the degree of entanglement \cite{WKW98,CHB96} and other quantum correlations \cite{HO01,WHZ03,LH01,Modi10,Roa2011} remain the same. Therefore, it has always been a very important research area finding out effective and efficient methods to give a complete LU-classification of all quantum states in the corresponding quantum system.

Actually, in many quantum information processing tasks, what we need is a set of quantum states rather than only an individual quantum state. Various sets of quantum states have been employed to design the corresponding quantum key distribution protocols \cite{terhal2001hiding,gea2002hiding,eggeling2002hiding,divincenzo2002quantum,luo2014multi,yang2007efficient,rahaman2015quantum,yang2015quantum}, especially quantum secret sharing protocols \cite{yang2007efficient,rahaman2015quantum,yang2015quantum}. In \cite{rahaman2015quantum}, the authors have presented a (2,n)-threshold quantum secret sharing protocol, where any two cooperating players from disjoint groups can always reconstruct the secret, based on the local discrimination of their specical GHZ-states set (the formal definition is in Section \ref{sec:level2}). Obviously, if we employ another LU-equivalent set to share this secret, we will get the same generality and efficiency as they claimed. And the only difference is that we have to operate the corresponding local unitaries on our set to derive their special set in \cite{rahaman2015quantum}. In this sense, the LU-equivalence of sets deserve much more consideration.

In addition, the local distinguishability, which has been widely studied in \cite{bennett1999quantum,ghosh2004distinguishability,walgate2000local,fan2004distinguishability,Owari06,hayashi2006,yu2012four,nathanson2013three,yu2015detecting}, will never change for two LU-equivalent sets of quantum states. That is to say, the LU-classification of all the sets of quantum states will help to assure the local distinguishability of the whole sets of the specific quantum system. There has already been a successful case in analyzing all the quadruples of generalized Bell states (GBSs) in $4 \otimes 4$ quantum system \cite{Tian16}. The authors have classified all the 1820 quadruples into 10 equivalent classes of LU-equivalent sets, and then discovered the fact that there are 3 locally indistinguishable quadruples and 7 locally distinguishable quadruples. As a result, the LU-equivalence of sets of quantum states deserves much more attention at least in the above circumstances.

However, very limited results have been obtained so far. Even in the work of Tian \textit{et al.}\cite{Tian16}, they considered the GBS-quadruples in one specific quantum system ($4\otimes 4$), and the GBS-triples in $p \otimes p$, $4 \otimes 4$ and $6 \otimes 6$ quantum systems. It is not hard to see that all these conclusions are about the nondegenerate or low-dimensional degenerate GBS-sets. In fact, the LU-equivalence of degenerate states is usually more complicated than that of nondegenerate states \cite{li2014local}. In this paper, by analyzing emphatically the properties of degenerate states, we obtain a complete classification of GBS-pairs in $d\otimes d$ quantum system for all positive integers $d$ and of GBS-triples in $p^{\alpha}\otimes p^{\alpha}$ quantum system for all prime $p$ and positive integer $\alpha$. Specifically, we consider generalized Pauli matrices (GPMs) since there is a one-to-one correspondence between GPMs and GBSs. Besides the Clifford operators~\cite{JM14}, we also present some more general unitary operators, to prove two GBS-sets are LU-equivalent. Moreover, we construct a new invariant which works efficiently to explain the local unitary inequivalence of two degenerate GBS-sets. The followings are our main results.
\begin{itemize}
\item[(a)] There are exactly $\prod_{j=1}^{n}(k_j+1)$ LU-inequivalent GBS-pairs in $d\otimes d$ quantum system when the prime factorization of $d=\prod_{j=1}^{n}p_j^{k_j}$.
\item[(b)] There are exactly $\lfloor \frac{5}{6}p^{2}\rfloor + \lfloor \frac{p-2}{6}+(-1)^{\lfloor\frac{p}{3}\rfloor}\frac{p}{3}\rfloor + 3$ LU-inequivalent GBS-triples in $p^{2}\otimes p^{2}$ quantum system for prime $p\geq 3$.
\item[(c)] There are $\frac{(\alpha + 3)}{6}p^{\alpha} + O(\alpha p^{\alpha-1})$ LU-inequivalent GBS-triples in $p^{\alpha}\otimes p^{\alpha}$ quantum system for all of prime $p$ and positive integer $\alpha$.
\end{itemize}
Our results show the LU-classifications for the individual MESs, the pairs and the triple are totally different. We wish our classification can serve for the classification of GBS-triples in all dimensions. The structure of the classification is much more complicated than we expected when the number of states in the set increases.

This paper is organized as follows. In Section \ref{sec:level2}, we review the definition of GBS and LU-equivalence. Next in Section \ref{sec:level3}, we introduce some unitary operators and build two necessary conditions for LU-equivalent GBS-triples. As applications of the above unitary operators and necessary conditions, we successfully classifies all of GBS-pairs in $d\otimes d$ quantum system for any dimension $d$ in Section \ref{sec:level4} , and we also give a classification of GBS-triples in $p^{\alpha}\otimes p^{\alpha}$ quantum system for all of primes $p$ in Section \ref{sec:level5}. Finally, we conclude this paper and emphasize some future work in Section \ref{sec:level6}.

\section{\label{sec:level2} Preliminaries}

We focus on the \emph{local unitary} (LU) equivalence of \emph{sets of generalized Bell states} (GBS-sets) in bipartite system $\mathcal{H}_{A}\otimes \mathcal{H}_{B}$ in this paper, where the dimensions of $\mathcal{H}_{A}$ and $\mathcal{H}_B$ are both $d$. For convenience, we denote the system as $d\otimes d$, and denote the computational basis of a single qudit as $\{|k\rangle|k\in\mathbb{Z}_d\}$. In the following we give a definition of GBS and LU-equivalence of GBS-sets.

Since GBS can be represented by \emph{generalized Pauli matrice} (GPM) (\cite{Tian16}), we firstly review the definition of GPM. Consider a $d\otimes d$ quantum system, the GPM is defined as
\begin{equation*}
U_{s,t}\triangleq X_{d}^{s}Z_{d}^{t}
\end{equation*}
where $s,t\in\mathbb{Z}_{d}=\{0,1,\cdots,d-1\}$, $X_{d}\triangleq\sum_{k\in\mathbb{Z}_{d}}|k + 1\rangle\langle k|$ is shift operator and $\quad Z_{d}\triangleq\sum_{k\in \mathbb{Z}_{d}}\omega^{k}|k\rangle\langle k|$ is clock operator with $\omega=\exp (2\pi i/d)$. We omit subscripts of $X_{d},Z_{d}$ when there is no ambiguity.
GBS is generated by operating a GPM locally on the \emph{standard maximally entangled state} (standard MES), that is,
\begin{equation*}
|\phi_{s,t}\rangle \triangleq (I\otimes U_{s,t})|\phi_{0,0}\rangle,
\end{equation*}
where $|\phi_{0,0}\rangle \triangleq \frac{1}{\sqrt{d}}\sum_{k\in\mathbb{Z}_{d}}|kk\rangle$ is the standard MES. Thus there is one-to-one correspondence between GBS and GPM. That is to say, some properties of GBS can be represented by those of the corresponding GPM, and the local unitary equivalence is one of them.

As is shown in Kraus's work \cite{Kraus10}, two bipartite states $|\phi\rangle,|\psi\rangle$ are called LU-equivalent, \emph{i.e.}, $|\phi\rangle\mapsto|\psi\rangle,$ if there exist local unitary operators $U_{A},U_{B}$ such that $|\phi\rangle = (U_{A}\otimes U_{B}) |\psi\rangle$ up to some global phase. Especially if $|\phi\rangle,|\psi\rangle$ are both GBSs, then the above LU-equivalence can be illustrated as the \emph{unitary equivalence} (U-equivalence) of their GPMs because the transpose operation keeps unitary, \emph{i.e.},
\begin{eqnarray*}
\begin{array}{ccc}
M &     =       & U_{B} N U_A^T \\
         & \triangleq  & U_L N U_R
\end{array}
\end{eqnarray*}
up to some global phase, where $M,N$ is the corresponding GPM of $|\phi\rangle,|\psi\rangle$ respectively, and $U_L\triangleq U_B, U_R\triangleq U_A^T$. We denote the U-equivalence of two GPMs as $M \sim N$, \emph{i.e.}, $M\approx U_{L} N U_{R} $, where ``$\approx$'' denotes ``equal up to some global phase''.

Next, we need to generalize the LU-equivalence of two GBSs to that of two GBS-sets, that is, we will define the U-equivalence of two GPM-sets based on that of two GPMs. Consider two GBS-sets $\{|\phi_1\rangle,\cdots, |\phi_n\rangle\}$ and $\{|\psi_1\rangle,\cdots, |\psi_n\rangle\}$, correspondingly their GPM-sets are denoted as $\mathcal{M}=\{M_{1},\cdots, M_{n}\}$ and $\mathcal{N}=\{N_{1},\cdots, N_{n}\}$. If there exist two unitary operators $U_A,U_B$ and a permutation $\pi$ over $\{1,\cdots,n\}$ such that $|\phi_i\rangle \approx (U_{A}\otimes U_{B}) |\psi_{\pi(i)}\rangle$ for each $i\in[n]$, then these two GBS-sets are called LU-equivalent, \emph{i.e.}, $\{|\phi_1\rangle,\cdots, |\phi_n\rangle\}\mapsto\{|\psi_1\rangle,\cdots, |\psi_n\rangle\}.$ Similarly, we derive the U-equivalence of the two GPM-sets, that is, if there exist $U_{L},U_{R},\pi$, and for any $i\in \{1,\cdots,n\}$, we have $U_L N_{{\pi(i)}} U_R\approx M_{i}$, then $\mathcal{M}$ and $\mathcal{N}$ are U equivalent, denoted as
\begin{equation*}
U_L\mathcal{N}U_R \approx \mathcal{M}.
\end{equation*}
Especially when $U_R= U^{\dagger}_L$, we call the two GPM-sets are \emph{unitary conjugate equivalent} (UC-equivalent), denoted as
\begin{equation*}
\mathcal{N} \stackrel{{U_L}}\sim \mathcal{M}.
\end{equation*}

In the following parts, we prefer employing the U-equivalence of GPM-sets to represent the LU-equivalence of GBS-sets because of the one-to-one correspondence. For simplicity, we will name the research subject, the GBS-sets with two or three elements, as GBS-pair and GBS-triple respectively. $\text{Inv}_{d}(k)$ is defined as an integer such that $k\cdot \text{Inv}_{d}(k) \equiv 1\text{ (mod } d)$ and $0<\text{Inv}_{d}(k)<d$. If there are no explicit explanation, $d$ is the dimension of quantum system, and we also use $k^{-1}$ to express the inverse of $k$ corresponding to $\mathbb{Z}_{d}$ for simplify. $k\perp d$ means that $k$ is co-prime to $d$. $a|d$ means that $a$ is a factor of $d$.


\section{\label{sec:level3}Conditions of LU-equivalence}

In this section, we will firstly review some useful unitary operators which can transform one GPM-set to another. Afterwards, we introduce two necessary conditions of U-equivalence between two GPM-sets.

\subsection{\label{sec:level3-1}Useful operators for unitary transformation}

Here we introduce Clifford operators and some more general unitary operators, to realize the U-equivalence of two GPM-sets. 

Clifford operators are unitary operators that map the Pauli group to itself under conjugation~\cite{Dehaene03}. For convenience of our classification, we introduce four common Clifford operators at first. Because of their simple transformation form, they also play an important role in the work of Tian et al.~\cite{Tian16}.

Operators $P,R$ are two basic Clifford operators. In fact, we can use them to generate all of the Clifford operators \cite{JM14}:
\begin{equation*}
\begin{aligned}
&P=\begin{cases}
\sum_{k\in\mathbb{Z}_{d}}{\omega}^{\frac{k(k-1)}{2}}|k\rangle\langle k|,\text{ if }d \text{ is odd.}\\
\sum_{k\in\mathbb{Z}_{d}}{\omega}^{\frac{k^{2}}{2}}|k\rangle\langle k|, \text{ if }d \text{ is even.}
\end{cases}\\
&R=\frac{1}{\sqrt{d}}\sum_{k,j\in\mathbb{Z}_{d}}\omega^{kj}|k\rangle\langle j|\\
\end{aligned}
\end{equation*}
The other two common Clifford operators are $V=P^{2}RPRP^{2}$ and $Q_{k}=RP^{k^{-1}}RP^{k}RP^{k^{-1}}$. Through direct computation, the above four Clifford operators realize the following UC-transformations
\begin{equation*}
\begin{aligned}
&X\stackrel{{P}}\sim XZ\text{ and }
Z\stackrel{{P}}\sim Z,\\
&X\stackrel{{R}}\sim Z\text{ and }
Z\stackrel{{R}}\sim X^{\dagger},\\
&X\stackrel{{V}}\sim X\text{ and }
Z\stackrel{{V}}\sim XZ,\\
&X \stackrel{{Q_k}}\sim X^{k^{-1}}\text{ and }
Z\stackrel{{Q_k}}\sim Z^{k} \text{ where } k\perp d.
\end{aligned}
\end{equation*}

These four UC-transformations are useful since they can transform two GPMs to the other two GPMs simultaneously.

Next, in the $p^{\alpha}\otimes p^{\alpha}$ system, we construct some other local unitary operators other than Clifford operators to help us move forward in the process of finding out more U-equivalent sets.

\begin{lemma}\label{permutation}
In a $p^{\alpha}\otimes p^{\alpha}$ system, there exists a local unitary operator $W$ which can realize the following UC-transformations:
\begin{equation*}
Z^{p^{s}}\stackrel{{W}}\sim Z^{p^{s}}\text{ and }X^{p^{t}}\stackrel{{W}}\sim X^{kp^{\alpha - s} + p^{t}},
\end{equation*}
where $s,t,k$ are non-negative integers and $s+t<\alpha,1\leq k<p^{s}$.
\footnote{In the rest of this paper, all of parameters are supposed to be non-negative integers if no special instructions.}
\label{conjugate transformation}
\end{lemma}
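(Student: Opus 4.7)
The plan is to construct $W$ explicitly as a basis-permutation operator on $\mathbb{C}^{p^{\alpha}}$, designed to preserve the $Z^{p^{s}}$-eigenspaces (so the first UC-transformation holds automatically) while twisting the shift $X^{p^{t}}$ by the extra amount $kp^{\alpha-s}$.

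First, I would observe that $Z^{p^{s}}|j\rangle = \omega^{jp^{s}}|j\rangle$ depends only on $j \bmod p^{\alpha-s}$, so the eigenspaces of $Z^{p^{s}}$ are indexed by $r\in \mathbb{Z}_{p^{\alpha-s}}$, each spanned by the $p^{s}$ vectors $\{|r+\ell p^{\alpha-s}\rangle : \ell \in \mathbb{Z}_{p^{s}}\}$. For $WZ^{p^{s}}W^{\dagger}\approx Z^{p^{s}}$ the operator $W$ must stabilize each such eigenspace, which suggests the ansatz $W|j\rangle = |\pi(j)\rangle$ with
\[
\pi(j) \equiv j + g(j)\,p^{\alpha-s} \pmod{p^{\alpha}},
\]
for some function $g : \mathbb{Z}_{p^{\alpha}}\to\mathbb{Z}_{p^{s}}$ to be determined.

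Next, I plug this ansatz into the second required relation. A short computation gives $WX^{p^{t}}W^{\dagger}|\pi(j)\rangle = |\pi(j+p^{t})\rangle$, and demanding that this equal $|\pi(j) + kp^{\alpha-s} + p^{t}\rangle$ collapses the entire problem to the single difference equation
\[
g(j+p^{t}) \equiv g(j) + k \pmod{p^{s}} \quad \text{for all } j \in \mathbb{Z}_{p^{\alpha}}.
\]
A natural explicit solution is $g(j) := k\lfloor j/p^{t}\rfloor \bmod p^{s}$. The only non-trivial consistency check is periodicity: after $p^{\alpha-t}$ iterations the argument wraps around modulo $p^{\alpha}$, which forces $p^{\alpha-t}k \equiv 0 \pmod{p^{s}}$. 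This is exactly where the hypothesis $s+t<\alpha$ enters — it gives $\alpha-t\geq s+1$, hence $p^{s}\mid p^{\alpha-t}$, and the recurrence is self-consistent.

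The main obstacle, and the most delicate bookkeeping, is verifying that the resulting $\pi$ is actually a bijection of $\mathbb{Z}_{p^{\alpha}}$. Writing $r = r_{0} + q_{0}p^{t}$ with $r_{0}\in[0,p^{t})$, a direct rewrite expresses the restriction of $\pi$ to the $r$-eigenspace as the affine map
\[
\ell \longmapsto \ell\bigl(1 + kp^{\alpha-s-t}\bigr) + kq_{0} \pmod{p^{s}}
\]
on the internal coordinate $\ell$. Because $\alpha-s-t\geq 1$, the multiplier satisfies $1 + kp^{\alpha-s-t}\equiv 1 \pmod{p}$, so it is a unit in $\mathbb{Z}_{p^{s}}$ and the affine map is a bijection of the eigenspace. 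Assembling the pieces, $W$ is a well-defined unitary on $\mathbb{C}^{p^{\alpha}}$: the transformation $Z^{p^{s}}\stackrel{W}{\sim} Z^{p^{s}}$ holds by construction of the ansatz, and $X^{p^{t}}\stackrel{W}{\sim} X^{kp^{\alpha-s}+p^{t}}$ follows from the verified difference equation, completing the proof.
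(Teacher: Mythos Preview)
Your proof is correct and constructs exactly the same permutation operator as the paper: writing $j=r_{0}+cp^{t}$ with $0\le r_{0}<p^{t}$, the paper defines $W|j\rangle=|r_{0}+c(kp^{\alpha-s}+p^{t})\rangle$, which is precisely your $\pi(j)=j+g(j)p^{\alpha-s}$ with $g(j)=k\lfloor j/p^{t}\rfloor$. The only difference is presentation: the paper posits $W$ and checks the two conjugation identities and injectivity by contradiction, whereas you derive $W$ from the eigenspace-preservation ansatz and the difference equation $g(j+p^{t})\equiv g(j)+k$, then verify bijectivity via the affine map $\ell\mapsto(1+kp^{\alpha-s-t})\ell+kq_{0}$ on each $Z^{p^{s}}$-eigenspace; both arguments hinge on the same arithmetic fact that $s+t<\alpha$ makes $1+kp^{\alpha-s-t}$ a unit modulo $p^{s}$.
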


\begin{proof}
We firstly construct a specific operator $W$, then prove this operator realize the above transformations and $W$ is unitary.

Define $W$ as follows,
$$W|j+cp^{t}\rangle = |(j+c(kp^{\alpha - s}+p^{t}))\text{ mod }p^{\alpha} \rangle$$
 for each $j,c$, where $0\leq c <p^{\alpha - t}$ and $0\leq j< p^{t}$. Since it will be more brief for the proof if we have the matrix representation in the computational basis of $W$, suppose 
 $$W=\sum_{j,l\in\mathbb{Z}_{p^{\alpha}}}w_{j,l}|j\rangle \langle l|.$$
By the definition of $W$, we have $(\omega^{(j-l)p^{s}}-1) w_{j,l}=0$ for any $j,l$, which means that $WZ^{p^s}W^{\dagger}=Z^{p^{s}}$. Meanwhile, $w_{j,l}=w_{j+kp^{\alpha-s}+p^{t} ,l+p^{t}}$ for any $j, l$, thus $WX^{p^{t}}W^{\dagger}=X^{kp^{\alpha-s}+p^{t}}$.

Since there is exactly one 1 in each column of $W$, in the following we will prove $W$ is unitary by showing that there are no rows which has at least two non-zero elements. By contrary, suppose there exists a row which have at least two non-zero elements, \emph{i.e.}, there exist $i,j$ where $0\leq i\leq j<p^{t}$, and two different indices $i' = j_{1}p^{t} + i, j' =j_{2}p^{t} + j$, satisfying the following two equations,
\begin{align}
&w_{j_1(kp^{\alpha - s} + p^t) + i, i'}=w_{j_2(kp^{\alpha - s} + p^{t}) + j,j'}=1, \label{Eq-LUe-1}\\
&j_1(kp^{\alpha - s} + p^t) + i\equiv  j_2(kp^{\alpha - s} + p^{t}) + j \text{ (mod } p^{\alpha}) \label{Eq-LUe-2}
\end{align}
for some $0\leq j_1,j_2 <p^{\alpha - t}$.

Since Equation \eqref{Eq-LUe-2} is the same as $j-i\equiv (j_1-j_2)(kp^{\alpha - s - t} + 1)p^{t}\text{ (mod } p^{\alpha})$, which can not be satisfied when $s+t< \alpha$ and $i'\ne j'$. Thus the operator $W$ is unitary.
\end{proof}

Actually the operators which can realize such unitary conjugate transformation of Lemma \ref{conjugate transformation} might not be unique, and the operator $W$ we construct is a simple permutation operator. Next, we will discuss the properties of some group of unitary operators which realize similar UC-transformations with $W$, as a generalization of Clifford groups \cite{Appleby05} for the convenience of later classifications.

Consider GPM $M_{\mathbf{a}}$ in $p^{\alpha}\otimes p^{\alpha}$ system with form $$M_{\mathbf{a}}=X^{a_{1}p^{t}}Z^{a_{2}p^{s}},$$
where $ \mathbf{a}=(a_1,a_2)^T,$ and $1\leq s,t<\alpha$. Obviously, for any fixed $s,t$, the operators
$$\{\omega^{\theta p^{s+t}}M_{\mathbf{a}}\}$$
form a group, where $\theta$ is an arbitrary number, denoted by $\mathcal{G}(t,s)$. Let group $\mathcal{C}(t,s)$ be the normalizer of this group which contains all of the unitary operators that realize UC-transformations of $\mathcal{G}(t,s)$, \emph{i.e.}, $U\mathcal{G}(t,s)U^{\dagger}=\mathcal{G}(t,s)$ for $U\in \mathcal{C}(t,s)$. 

Similar to the properties of Clifford operators \cite{Appleby05}, we can write out the matrix representation for the UC-transformation made by the corresponding U in group $\mathcal{C}(t,s)$, and the determinant of this matrix equals to 1 module $p^{\alpha - s - t}$. Details are in Lemma \ref{LemGroup}. By \cite{JM14,Appleby05}, any Clifford operators can be uniquely represented by a $(2\times 2)$ symplectic matrix. When we restrict $U\in\mathcal{C}(t,s) $ $(s+t\leq \alpha)$, $U$ can be uniquely represented by a $(2\times 2)$ matrix $F$ with similar method.
%

\begin{lemma}\label{LemGroup}
Any operator $U\in\mathcal{C}(t,s) $ $(s+t\leq \alpha)$, can be presented by a unique ($2\times 2$) matrix $F$, and the determinant of $F$ satisfy
$$\det(F)\equiv 1\text{ (mod }p^{\alpha - s - t})$$
for which $$F=\begin{pmatrix}
\upsilon&\eta\\
\sigma&\tau
\end{pmatrix}$$
where $0\leq \upsilon,\eta< p^{\alpha - t}$, and $0\leq \sigma,\tau< p^{\alpha - s}$.
\end{lemma}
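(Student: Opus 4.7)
The plan is to build the matrix $F$ entry by entry from the conjugation action of $U$ on the two generators $X^{p^t}$ and $Z^{p^s}$ of $\mathcal{G}(t,s)$, and then extract the determinant relation from the canonical commutation relation $ZX = \omega XZ$.

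First I would observe that since $U\in \mathcal{C}(t,s)$ normalises $\mathcal{G}(t,s)$, both $UX^{p^t}U^{\dagger}$ and $UZ^{p^s}U^{\dagger}$ lie in $\mathcal{G}(t,s)$, so I can write
\begin{equation*}
UX^{p^{t}}U^{\dagger}=\omega^{\theta_{1}p^{s+t}}X^{\upsilon p^{t}}Z^{\sigma p^{s}},\qquad UZ^{p^{s}}U^{\dagger}=\omega^{\theta_{2}p^{s+t}}X^{\eta p^{t}}Z^{\tau p^{s}},
\end{equation*}
and define $F=\bigl(\begin{smallmatrix}\upsilon&\eta\\\sigma&\tau\end{smallmatrix}\bigr)$. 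Uniqueness of the entries in the stated ranges is immediate: $X^{p^{t}}$ has order $p^{\alpha-t}$, forcing $\upsilon,\eta$ to be taken modulo $p^{\alpha-t}$, and similarly $Z^{p^{s}}$ has order $p^{\alpha-s}$, forcing $\sigma,\tau$ to be taken modulo $p^{\alpha-s}$.

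The determinant condition is the main work. I would apply $U(\cdot)U^{\dagger}$ to both sides of the commutation identity $X^{p^{t}}Z^{p^{s}}=\omega^{-p^{s+t}}Z^{p^{s}}X^{p^{t}}$ and expand. On the left, using $Z^{\sigma p^{s}}X^{\eta p^{t}}=\omega^{\sigma\eta p^{s+t}}X^{\eta p^{t}}Z^{\sigma p^{s}}$, one obtains
\begin{equation*}
\omega^{(\theta_{1}+\theta_{2}+\sigma\eta)p^{s+t}}X^{(\upsilon+\eta)p^{t}}Z^{(\sigma+\tau)p^{s}}.
\end{equation*}
On the right, the same manoeuvre (interchanging the roles of the two pieces) yields
\begin{equation*}
\omega^{(-1+\theta_{1}+\theta_{2}+\upsilon\tau)p^{s+t}}X^{(\eta+\upsilon)p^{t}}Z^{(\tau+\sigma)p^{s}}.
\end{equation*}
Matching the phase prefactors, and noting that $\omega^{p^{s+t}}$ has order exactly $p^{\alpha-s-t}$, I get $\upsilon\tau-\eta\sigma\equiv 1\pmod{p^{\alpha-s-t}}$, which is exactly $\det(F)\equiv 1\pmod{p^{\alpha-s-t}}$.

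The step I expect to be slightly delicate is the bookkeeping of the commutator phases when expanding both sides, because the exponents involve products of the form $(\text{entry of }F)\cdot p^{s+t}$, and one must be careful that these phases are read modulo $p^{\alpha-s-t}$ rather than modulo $p^{\alpha}$; everything else, including uniqueness of $F$ in the prescribed ranges, follows cleanly from the orders of the generators $X^{p^{t}}$ and $Z^{p^{s}}$.
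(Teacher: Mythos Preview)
Your proposal is correct and follows essentially the same approach as the paper: define $F$ from the conjugation action of $U$ on the two generators, then conjugate the canonical commutation relation and compare the resulting phase factors to extract $\det(F)\equiv 1\pmod{p^{\alpha-s-t}}$. The only cosmetic difference is that the paper packages the calculation with general vectors $\mathbf{a},\mathbf{b}$ and a phase function $g(\mathbf{a})$ (so that $UM_{\mathbf{a}}U^{\dagger}=e^{ig(\mathbf{a})}M_{F\mathbf{a}}$) before comparing $UM_{\mathbf{a}}M_{\mathbf{b}}U^{\dagger}$ with $UM_{\mathbf{b}}M_{\mathbf{a}}U^{\dagger}$, whereas you work directly with the specific generators $X^{p^{t}}$ and $Z^{p^{s}}$; the content is the same.
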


\begin{proof}
Since $U\in\mathcal{C}(t,s)$, then there exist $\upsilon,\eta,\sigma$ and $\tau$ such that
\begin{align}
X^{p^{t}}\mathop\sim\limits^{U}X^{\upsilon p^{t}}Z^{\sigma p^{s}},\text{ and }Z^{p^{s}}\mathop\sim\limits^{U}X^{\eta p^{t}}Z^{\tau p^{s}}
\label{Eq-UF}
\end{align}
where $0\leq \upsilon,\eta< p^{\alpha - t}$, and $0\leq \sigma,\tau< p^{\alpha - s}$. Thus we construct the presenting $(2\times 2)$ matrix $F$ of $U$ as
$$F\triangleq \begin{pmatrix}
\upsilon&\eta\\
\sigma&\tau
\end{pmatrix}$$
By Equation \eqref{Eq-UF}, $F$ is unique for which $0\leq \upsilon,\eta< p^{\alpha - t}$, and $0\leq \sigma,\tau< p^{\alpha - s}$.

Next, we prove $\det(F)\equiv 1\text{ (mod }p^{\alpha-s-t})$.  Let $\omega_1=\omega^{p^{s+t}}$, then there exists a function $g$ such that $$UM_{\mathbf{a}}U^{\dagger}=e^{ig(\mathbf{a})}M_{F\mathbf{a}}$$
Let $(F\mathbf{a})_{1}\triangleq \upsilon a_1+\eta a_2\text{ (mod } p^{\alpha -t})$, $(F\mathbf{a})_{2}\triangleq \sigma a_1+\tau a_2\text{ (mod } p^{\alpha -s})$, then we have
\begin{equation}
UM_{\mathbf{a}}M_{\mathbf{b}}U^{\dagger}=e^{i(g(\mathbf{a})+g(\mathbf{b}))}\omega_{1}^{(F\mathbf{a})_{2}\times ({F}\mathbf{b})_{1}}M_{{F}(\mathbf{a}+\mathbf{b})}
\label{a2}
\end{equation}
Since $\omega_{1}^{a_{1}b_{2}}M_{\mathbf{a}}M_{\mathbf{b}}=\omega_{1}^{a_{2}b_{1}}M_{\mathbf{b}}M_{\mathbf{a}}$, combining with Equation \eqref{a2}, we have $\omega_{1}^{a_{1}b_{2}-a_{2}b_{1}}=\omega_{1}^{(F\mathbf{a})_{1}\times (F\mathbf{b})_{2}-(F\mathbf{a})_{2}\times (F\mathbf{b})_{1}}$. Thus $\det(F) \equiv 1\text{ (mod }p^{\alpha - s - t})$, which completes our proof.

\end{proof}

It is easy to find that the unitary operator $W$ in Lemma \ref{permutation} is an element of $\mathcal{C}(t,s)$, and the corresponding $F$ of $W$ is
\begin{equation*}
F=\begin{pmatrix}
1&0\\
0&kp^{\alpha - s - t}+1
\end{pmatrix}
\end{equation*}
which satisfies $\det(F)\equiv 1 \text{ (mod } p^{\alpha - s- t})$.

Moreover, we can employ the contraposition of Lemma 2 to prove the UC-inequivalence of two GPM pairs. That is, given two UC-transformations (such as Equation \eqref{Eq-UF}) in which GPMs are in $\mathcal{G}(t,s)$, and the determinant of the corresponding $F$ is not equal to 1, then there are no unitary operators can transform these UC-transformations simultaneously.
 
\subsection{\label{sec:level3-2}Necessary conditions for U-equivalence of GPM-sets}

The above subsection provides some unitary operators as a transformation tool for U-equivalent GPM-sets. In this subsection, we will present two necessary conditions for U-equivalent GPM-sets, which can serve as a tool to prove the completeness of our classification.

Firstly, just as has been referred in \cite{Tian16}, we also need the U-equivalent invariants to be a necessary condition, that is, two U-equivalent GPM-sets must have the same value for those invariant. In a $d\otimes d$ quantum system, for GPM-set $\mathcal{M}=\{M_{1},\cdots,M_{n}\}$, let $\Delta_{ij}=M_{i}^{\dagger}M_{j}$ for $1\leq i,j\leq n$. Those three invariants in \cite{Tian16} are as follows.
\begin{eqnarray}
\label{invariant}
\begin{array}{c}
\vspace{0.2cm}
\mathbb{I}^{(1)}_{\mathcal{M}}=\frac{1}{d}\sum_{i,j,k,l\in[n]}\text{Tr}(|[\Delta_{ij},\Delta_{kl}]|^{2}) \\
\vspace{0.2cm}
\mathbb{I}^{(2)}_{\mathcal{M},a}=\frac{1}{d}\sum_{i,j\in[n]}|\text{Tr}((\Delta_{ij})^a)| \\
\vspace{0.2cm}
\mathbb{I}^{(3)}_{a,\mathcal{M}}=\frac{1}{d^{2}}\sum_{i,j,u,v,w,l\in[n]}|(\text{Tr}(\Delta_{ij}^{a}\Delta_{uv})\text{Tr}(\Delta_{ij}^{1-a}\Delta_{wl}))|
\end{array}
\end{eqnarray}
where $0<a<d$, $|\bullet|^2$ is the module square operator function, that is, $|A|^2 = AA^{\dagger}$. It is not hard to find the basic idea of these three invariants. Since $U_L\mathcal{N}U_R\approx \mathcal{M}$, without loss of generality, suppose $U_L N_i U_R \approx M_i$ for $1\leq i\leq n$, then $N_i^{\dagger}N_j\mathop\sim\limits^{U_{R}^{\dagger}} M_i^{\dagger}M_j$. Thus Tr$(N_i^{\dagger}N_j)=$ Tr$(M_i^{\dagger} M_j)$. Thus the above invariants hold. According to this idea, we can build two more invariants which is useful for our classification.

In a $d\otimes d$ quantum system, for any GPM-set $\mathcal{M}=\{M_{1},\cdots,M_{n}\}$, define $\mathcal{M}^t$ be the GPM-set $\{M^t_{1},\cdots,M^t_{n}\}$.
\begin{corollary}
The following two invariants
\begin{eqnarray}
\label{invariant2}
\mathbb{I}^{(1)}_{\mathcal{M}^{t}},\mathbb{I}^{(3)}_{a,\mathcal{M}^{t}},
\end{eqnarray}
hold when $\mathcal{M} \sim \mathcal{N}$ with $0<a,t<d$.
\end{corollary}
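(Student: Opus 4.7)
The plan is to reduce the corollary to the invariance of $\mathbb{I}^{(1)}$ and $\mathbb{I}^{(3)}$ already established for the original set, by showing that the operators $\Delta^{\mathcal{M}^t}_{ij}=(M_i^t)^{\dagger}M_j^t$ behave, up to an irrelevant global phase, exactly like the $t$-th power of the original difference operators $\Delta^{\mathcal{M}}_{ij}=M_i^{\dagger}M_j$. The crux is that every GPM $U_{s,r}=X^sZ^r$ satisfies $U_{s,r}^k\approx U_{ks,kr}$, so taking powers preserves the GPM class up to a scalar. Using the commutation relation $Z^rX^s=\omega^{rs}X^sZ^r$, a direct expansion gives $M_i^{\dagger}M_j\approx U_{a_j-a_i,\,b_j-b_i}$ and $(M_i^t)^{\dagger}M_j^t\approx U_{t(a_j-a_i),\,t(b_j-b_i)}$, whence
\[
\Delta^{\mathcal{M}^t}_{ij}\;\approx\;\bigl(\Delta^{\mathcal{M}}_{ij}\bigr)^{t}.
\]
So my first step is this identification (worked out carefully with the Weyl phase bookkeeping); analogously $\Delta^{\mathcal{N}^t}_{\pi(i)\pi(j)}\approx(\Delta^{\mathcal{N}}_{\pi(i)\pi(j)})^{t}$.

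Next, I invoke the same observation the paper uses for the original three invariants: if $\mathcal{M}\sim\mathcal{N}$, i.e.\ $U_L N_{\pi(i)}U_R\approx M_i$, then $\Delta^{\mathcal{M}}_{ij}\stackrel{U_R^{\dagger}}{\sim}\Delta^{\mathcal{N}}_{\pi(i)\pi(j)}$ as a genuine UC-equivalence (the two $U_L$'s cancel, leaving only $U_R^{\dagger}\cdot U_R$ flanking). Conjugation commutes with taking powers, so
\[
\bigl(\Delta^{\mathcal{M}}_{ij}\bigr)^{t}\;\stackrel{U_R^{\dagger}}{\sim}\;\bigl(\Delta^{\mathcal{N}}_{\pi(i)\pi(j)}\bigr)^{t},
\]
and combining with the first step yields $\Delta^{\mathcal{M}^t}_{ij}\stackrel{U_R^{\dagger}}{\sim}\Delta^{\mathcal{N}^t}_{\pi(i)\pi(j)}$ up to a global phase. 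This is precisely the hypothesis that drives the original proof of invariance for $\mathbb{I}^{(1)}$ and $\mathbb{I}^{(3)}$ when applied to the sets $\mathcal{M}^t$ and $\mathcal{N}^t$; quoting it (or rerunning the two-line trace/commutator argument) finishes the proof.

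The only step that needs genuine care is verifying that the scalar phases accumulated during the GPM-power computation do not survive in the invariants. For $\mathbb{I}^{(1)}_{\mathcal{M}^t}$ the integrand is $\operatorname{Tr}|[\Delta^{\mathcal{M}^t}_{ij},\Delta^{\mathcal{M}^t}_{kl}]|^{2}$; since $[e^{i\alpha}A,e^{i\beta}B]=e^{i(\alpha+\beta)}[A,B]$ and $|\cdot|^{2}=(\cdot)(\cdot)^{\dagger}$, all phases cancel. For $\mathbb{I}^{(3)}_{a,\mathcal{M}^t}$ the absolute values around each trace do the same job. Thus the equalities $\Delta^{\mathcal{M}^t}_{ij}\approx(\Delta^{\mathcal{M}}_{ij})^{t}$ and the UC-equivalence transfer under trace and absolute value without loss. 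This phase-tracking is the only real obstacle; everything else is bookkeeping with the already-proved Weyl identities and a direct reuse of the argument behind \eqref{invariant}.
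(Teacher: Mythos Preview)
Your proposal is correct and follows essentially the same route as the paper's own proof: both establish $(M_i^t)^{\dagger}M_j^t\approx (M_i^{\dagger}M_j)^t$ from the GPM commutation-up-to-phase property, use that $U_R^{\dagger}(\,\cdot\,)U_R$ commutes with taking $t$-th powers to transfer the UC-equivalence, and then observe that the residual global phases are killed by the absolute values and the $|\cdot|^2$ in $\mathbb{I}^{(1)}$ and $\mathbb{I}^{(3)}$. Your write-up is in fact slightly more explicit about the phase bookkeeping than the paper's, but the argument is the same.
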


\begin{proof}
Since $U_{L}\mathcal{N}U_{R}\approx \mathcal{M}$, then $N_{i}^{\dagger}N_{j}\mathop\sim\limits^{U_{R}^{\dagger}} M_{i}^{\dagger}M_{j}$, thus we have
\begin{align}\label{EqInvPow}
\begin{aligned}
&U_{R}^{\dagger}(N_{i}^{t})^{\dagger}N_{j}^{t} U_{R}\\
\approx & U_{R}^{\dagger}N_{i}^{\dagger}N_{j}U_{R}U_{R}^{\dagger}N_{i}^{\dagger}N_{j}U_{R}\cdots U_{R}^{\dagger}N_{i}^{\dagger}N_{j}U_{R}\\
\approx & (M_{i}^{\dagger}M_{j})^{t}\\
\approx & (M_{i}^{t})^{\dagger} M_{j}^{t}
\end{aligned}
\end{align}
The first and third approximations of Equation \eqref{EqInvPow} hold since $\mathcal{M}$ and $\mathcal{N}$ are GPMs, they are commutative up to some phases. Then we have
\begin{eqnarray*}
\mathbb{I}^{(1)}_{\mathcal{M}^{t}} = \mathbb{I}^{(1)}_{\mathcal{N}^{t}},\mathbb{I}^{(3)}_{a,\mathcal{M}^{t}}=\mathbb{I}^{(3)}_{a,\mathcal{N}^{t}}.
\end{eqnarray*}
\end{proof}

The above invariants are quite helpful in the following classifications since the difference of any invariant leads to U-inequivalence. However, there still exist some exceptions when applying these invariants to GPM-sets. That is, two specific GPM-sets, which have the same value of all the above invariants, are proved to be U-inequivalent. At this time, we have to hunt for the second necessary conditions to explain the U-inequivalence between two GPM-sets.

Fortunately, for all the coming undermined U-equivalence in the classifying process, it is enough to prove the U-inequivalence of two specific sets just as shown in the following Theorem \ref{thm-inequivalence}. In the following theorem, we show two GPM triples are U-inequivalent while all the invariants in \eqref{invariant} and \eqref{invariant2} are equal in this case.

\begin{theorem}
In $p^{\alpha}\otimes p^{\alpha}$ quantum system, if the GPM-triples $\{I,Z^{p^s},X^{kp^t}Z^{t'p^{s}}\}$ are U-equivalent to $ \{I,Z^{p^s},X^{-kp^t}Z^{t'p^{s}}\}$, then one of the following items hold:
 \begin{itemize}
 \item[(a)]$ t'\in \{2,\frac{p^{t-s}+1}{2},p^{t-s}-1\}, p\geq 3$
 \item[(b)]$ t'\in\{2,p^{t-s}-1\}, p=2 \text{ and } s+t<\alpha -1  $
 \item[(c)] $p=2$ and $s+t+1=\alpha$.
\end{itemize}
where $s+t<\alpha,s<t,1< t'<p^{t-s}$, $k\perp p$.
\label{thm-inequivalence}
\end{theorem}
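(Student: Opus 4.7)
My plan is to reduce the U-equivalence to a UC-equivalence and then exhaust the six possible permutations $\pi$ of the three GPMs using the matrix representation of Lemma 2.

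First, suppose $U_L\mathcal{N} U_R \approx \mathcal{M}$ with permutation $\pi$, and let $j = \pi^{-1}(1)$, so that $I = N_1$ maps to $M_j$ and $V = U_L U_R \approx M_j$. Conjugation by $U_L$ alone then gives $U_L \mathcal{N} U_L^\dagger \approx \tilde{\mathcal{M}}_j$, where $\tilde{\mathcal{M}}_j = \{M_i M_j^\dagger : 1\leq i\leq 3\}$; this is a pure UC-equivalence. Because the two non-identity elements of $\mathcal{N}$ generate $\mathcal{G}(t,s)$, we have $U_L \in \mathcal{C}(t,s)$, so Lemma 2 produces a matrix $F$ with $\det(F)\equiv 1\pmod{p^{\alpha-s-t}}$. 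The three choices of $j$ combined with the two permutations of the remaining non-identity elements give six sub-cases.

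Second, for each sub-case I would read off the defining equations for $F$, solve for the entries $(\upsilon,\eta,\sigma,\tau)$, and compute $\det(F)$. Three sub-cases force $\det(F)\equiv -1\pmod{p^{\alpha-s-t}}$, which requires $p^{\alpha-s-t}\mid 2$ and is exactly case (c). The other three sub-cases yield $\det(F)\equiv 1$ automatically, but the additional structural requirement that $F$ comes from an honest element of $\mathcal{C}(t,s)$ — captured by the divisibility $p^{t-s}\mid\sigma$ inherited from the Clifford group and the $W$-operators of Lemma 1 — collapses them to the congruences $t'^2\equiv 1$, $2t'\equiv 1$, and $t'(t'-2)\equiv 0$ modulo $p^{t-s}$, respectively.

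Third, for $p\geq 3$ the three congruences admit in the range $1<t'<p^{t-s}$ exactly the solutions $t'=p^{t-s}-1$, $t'=(p^{t-s}+1)/2$, and $t'=2$, which is case (a). For $p=2$ the congruence $2t'\equiv 1\pmod{2^{t-s}}$ is unsolvable by parity, removing the middle value; the congruence $t'^2\equiv 1\pmod{2^{t-s}}$ has spurious additional roots $2^{t-s-1}\pm 1$ for $t-s\geq 3$ that must be ruled out by the finer parity constraints on Clifford matrices in even dimension. This leaves only $t'=2$ and $t'=2^{t-s}-1$, giving case (b). When $s+t+1=\alpha$ the determinant condition becomes vacuous and every $t'$ is admissible, which is case (c).

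The main obstacle will be justifying the extra divisibility $p^{t-s}\mid\sigma$ beyond Lemma 2 alone, and in the $p=2$ case, pinning down the precise parity constraints on Clifford matrices over $\mathbb{Z}_{2^\alpha}$ that eliminate the spurious solutions of $t'^2\equiv 1\pmod{2^{t-s}}$. This requires a careful inventory of how Clifford operators and the $W$-operators of Lemma 1 compose to generate $\mathcal{C}(t,s)$ modulo phases.
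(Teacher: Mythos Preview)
Your overall plan --- reduce to a UC-equivalence, enumerate the six permutations, and split them into three sub-cases with $\det F\equiv -1$ (handled by Lemma~\ref{LemGroup}) versus three with $\det F\equiv +1$ --- is exactly the paper's strategy. The divergence is in how you propose to extract the extra constraint for the determinant-$+1$ cases.

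The paper does \emph{not} prove $p^{t-s}\mid\sigma$ as a structural fact about $\mathcal{C}(t,s)$ via a generator inventory. Instead, for each of the three permutations it writes down the specific UC-relation (e.g.\ $X^{kp^t}\stackrel{U_1}{\sim}X^{-kp^t}Z^{(2t'-1)p^s}$ for permutation~\eqref{mode-3}) and then argues \emph{spectrally}: if the relevant exponent (here $2t'-1$) were not divisible by $p^{t-s}$, one could conjugate both sides by explicit Clifford operators ($R$ and a suitable power of $V$) to reach $Z^{p^t}\stackrel{U}{\sim}Z^{(2t'-1)p^s}$, which is impossible because $Z^{p^t}$ and $Z^{p^{s+g}}$ with $g<t-s$ have different eigenvalue multiplicities. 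This elementary spectral obstruction is what yields the three congruences $2t'\equiv 1$, $t'^2\equiv 1$, $t'(t'-2)\equiv 0\pmod{p^{t-s}}$. The divisibility $p^{t-s}\mid\sigma$ you identified is correct and is in fact an immediate consequence of this spectral argument; but deriving it by cataloguing generators of $\mathcal{C}(t,s)$, as you propose, is both harder and unnecessary. Replace that step with the spectral argument and the $p\ge 3$ case goes through cleanly.

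Your concern about the $p=2$ spurious roots of $t'^2\equiv 1\pmod{2^{t-s}}$ (namely $2^{t-s-1}\pm 1$) is legitimate: the spectral argument alone does not exclude them, and the paper's appendix is equally terse at this point. If you pursue the ``finer parity constraints'' route, be aware that these spurious values satisfy $t'=t'^{-1}\pmod{2^{t-s}}$, so the standing equivalence $\{I,Z^{p^s},X^{kp^t}Z^{t'p^s}\}\sim\{I,Z^{p^s},X^{-kp^t}Z^{t'^{-1}p^s}\}$ already interacts nontrivially with them; any argument you supply here must be checked carefully against that relation rather than assumed from general Clifford parity lore.
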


\begin{proof}
In order to show the U-equivalence of any two GPM-triples, we will at first simplify the U-equivalence of GPM-triples to UC-equivalence for two specific GPM-pairs, and then use the necessary condition in Lemma \ref{LemGroup} to deduce the UC-inequivalence for the two specific GPM-pairs.

Suppose GPM triples $\mathcal{N}=\{I,Z^{p^s},X^{kp^t}Z^{t'p^{s}}\}$ and $\mathcal{M}=\{I,Z^{p^s},X^{-kp^t}Z^{t'p^{s}}\}$ are U-equivalent. There exist unitary operators $U_1,U_2$, and $U_1\mathcal{N}U_2\approx\mathcal{M}$. There are 6 possible permutations for this transformation. We enumerate them as following (the left $i$-th GPM are unitary equivalent to the right $i$-th GPM for $1\leq i\leq 3$):
\begin{equation}
U_1\langle I,Z^{p^s},X^{kp^{t}}Z^{p^st'}\rangle U_2\approx\langle I,Z^{p^s},X^{-kp^{t}}Z^{p^st'}\rangle
\label{mode-1}
\end{equation}
\vspace{-1.3cm}

\begin{equation}
U_1\langle I,Z^{p^s},X^{kp^{t}}Z^{p^st'}\rangle U_2\approx\langle I,X^{-kp^{t}}Z^{p^st'},Z^{p^s}\rangle
\label{mode-2}
\end{equation}
\vspace{-1.3cm}

\begin{equation}
U_1\langle I,Z^{p^s},X^{kp^{t}}Z^{p^st'}\rangle U_2\approx\langle Z^{p^s},I,X^{-kp^{t}}Z^{p^st'}\rangle
\label{mode-3}
\end{equation}
\vspace{-1.3cm}

\begin{equation}
U_1\langle I,Z^{p^s},X^{kp^{t}}Z^{p^st'}\rangle U_2\approx\langle Z^{p^s},X^{-kp^{t}}Z^{p^st'},I\rangle
\label{mode-4}
\end{equation}
\vspace{-1.3cm}

\begin{equation}
U_1\langle I,Z^{p^s},X^{kp^{t}}Z^{p^st'}\rangle U_2\approx\langle X^{-kp^{t}}Z^{p^st'},Z^{p^s},I\rangle
\label{mode-5}
\end{equation}
\vspace{-1.3cm}

\begin{equation}
U_1\langle I,Z^{p^s},X^{kp^{t}}Z^{p^st'}\rangle U_2\approx\langle X^{-kp^{t}}Z^{p^st'},I,Z^{p^s}\rangle
\label{mode-6}
\end{equation}

For briefness, we only explain the details of the permutations \eqref{mode-1},\eqref{mode-3} as the examples, and the rest permutations can be analyzed in the same way.

Employing the three equivalence relationships in \eqref{mode-1}, we can easily eliminate $U_{2}$ and derive the following two UC-transformations under $U_{1}$:
 \begin{equation*}
\small
Z^{p^{s}}\mathop\sim\limits^{U_1} Z^{p^{s}}\text{ and }X^{kp^{t}}Z^{p^st'}\mathop\sim\limits^{U_1}X^{-kp^{t}}Z^{p^st'}.
\end{equation*}
Thus, we have $X^{p^{t}}\mathop\sim\limits^{U_1} X^{-p^{t}}$.
So $U_1$ is an element of $\mathcal{C}(t,s)$. The corresponding $(2\times 2)$ matirx $F$ of $U_{1}$ is as follows,
\begin{equation*}
F=\begin{pmatrix}
p^{\alpha - t}-1&0\\
0&1
\end{pmatrix}.
\end{equation*}
It is easy to show $M_{\mathbf{a}}\mathop\sim\limits^{U_1} M_{F\mathbf{a}}$, where $\mathbf{a} = (-1,1)^{T}$. By Lemma~\ref{LemGroup}, we have 
$$\det(F)\equiv 1\text{ (mod } p^{\alpha-s-t}),$$
which means $-1\equiv 1 \text{ (mod } p^{\alpha-s-t})$. It can be satisfied only if $p=2$ and $s+t+1=\alpha$.

In permutation mode \eqref{mode-3}, after eliminating $U_2$, we obtain the following two UC-transformations under $U_1$:
\begin{equation*}
Z^{p^{s}}\mathop\sim\limits^{U_1} Z^{-p^{s}}\text{ and }X^{kp^{t}}Z^{p^st'}\mathop\sim\limits^{U_1} X^{-kp^{t}}Z^{(t'-1)p^{s}}
\end{equation*}
From these two transformations, we have
\begin{equation}
X^{kp^{t}}\mathop\sim\limits^{U_1} X^{-kp^{t}}Z^{(2t'-1)p^{s}}.
\label{e2-1}
\end{equation}

If $2t'-1\not\equiv 0\text{ (mod }p^{t-s})$, let $k'p^{g}=2t'-1$ where $k'\perp p$ and $g<t-s$, let $u = t - s - g$, then
\begin{equation}\label{e2-Half}
X^{-kp^{t}}Z^{(2t'-1)p^{s}}\widesim[3]{V^{kk'^{-1}p^{u}}} Z^{(2t'-1)p^{s}}
\end{equation}
On the other hand, $X^{p^{t}}\widesim{R} Z^{p^{t}}$, combined with Equation \eqref{e2-1} and \eqref{e2-Half} we will get
\begin{equation}
\label{Eign}
Z^{p^{t}}\widesim{U}Z^{(2t'-1)p^{s}}
\end{equation}
for some unitary $U$. Meanwhile, $Z^{p^{t}}$ and $Z^{p^{s}}$ are not UC-equivalent for $s\ne t$, then $(2t'-1)\equiv 0\text{ (mod }p^{t-s})$ if transformation \eqref{e2-1} holds. Thus the necessary condition of this permutation mode is $t'=(p^{t-s}+1)/2$.

For the remaining four cases, we find the determinant $\det(F)\text{ module }p^{\alpha -s-t}$ for the corresponding $(2\times 2)$ matrices $F$ of $U_{1}$ in permutations \eqref{mode-2}, \eqref{mode-4}, \eqref{mode-5}, \eqref{mode-6} are equal to $1,-1,1,-1$ when module $p^{\alpha -s -t}$ respectively. By Lemma \ref{LemGroup}, permutations \eqref{mode-4}, \eqref{mode-6} can be satisfied only when $p=2$ and $s+t+1=\alpha$. Combining with the results of permutations \eqref{mode-2} and \eqref{mode-5} (we put them into Appendix \ref{appendix3}), we find that the above GPM-triples are U-equivalent if $t' \in\{2,\frac{p^{t-s}+1}{2},p^{t-s}-1\}$ when $p\geq 3$ or $t'\in\{2,p^{t-s}-1\} $ when $p=2,s+t<\alpha-1$ or $p=2,s+t+1=\alpha$.
\end{proof}

Now, on the basis of useful operators for U-transformations and necessary conditions for U-equivalent GPMs, we are ready to discuss the U-classifications of GPM-sets in the corresponding quantum systems.

\section{\label{sec:level4} Classification of GPM-pairs}
In this section, we will classify all the GPM-pairs in $d\otimes d$ quantum system into U-inequivalent classes completely. GPM-pairs are the simplest GPM-sets, and studying their U-classifications will be the first step of U-equivalence of GPM-sets without doubt.

Based on the conditions for U-equivalence of sets just mentioned in the above section, we can classify all the GPM-pairs $\{I,X^{s}Z^{t}\}$ (the first one can always be U-equivalent to identity while the second one keeps GPM form) into minimal U-equivalence classes for $s,t<d$. Specifically, we have the following theorem.

\begin{theorem}
All of the GPM-pairs can be classified into $\prod_{1\leq i\leq n}(k_{i}+1)$ U-inequivalent pairs in $d\otimes d$ quantum system, where $d=p_{1}^{k_{1}}p_{2}^{k_{2}}\cdots p_{n}^{k_{n}}$ for some different primes $p_1,\cdots,p_n$. Furthermore, the representing GPM pairs are $\{I,Z^{s}\}$, where $s$ is a factor of $d$.
\label{thm-1}
\end{theorem}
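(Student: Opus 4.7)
The plan is to prove the theorem via canonical forms: show every pair $\{I, X^{s}Z^{t}\}$ is U-equivalent to some $\{I, Z^{g}\}$ with $g = \gcd(s,t,d)$ a divisor of $d$, and then show that distinct divisors yield inequivalent pairs. The count $\prod_{j}(k_{j}+1)$ follows immediately, as it is the number of divisors of $d = \prod_{j} p_{j}^{k_{j}}$.

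For the reduction step, note that in any U-equivalence $\{I,M\} \sim \{I,M'\}$ of GPM-pairs, no non-trivial GPM is proportional to $I$, so the only viable permutation is $I\mapsto I$. This forces $U_{R}\approx U_{L}^{\dagger}$, reducing the equivalence to unitary conjugation $UMU^{\dagger}\approx M'$. Thus it suffices to conjugate $X^{s}Z^{t}$ to $Z^{g}$ with $g=\gcd(s,t,d)$. The Clifford operators $P,R,Q_{k}$ from Section~\ref{sec:level3-1} act on the exponent vector $(s,t)^{T}\in\mathbb{Z}_{d}^{2}$ as shear, signed swap, and unit scaling respectively, and together generate the $SL(2,\mathbb{Z}_{d})$ action on exponent vectors. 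I would verify transitivity of this action on vectors of fixed $\gcd(s,t,d)$ by Chinese-remainder reduction to prime-power moduli followed by an Euclidean-style elimination on the two exponents, ultimately driving $(s,t)^{T}$ to $(0,g)^{T}$. The degenerate case $(s,t)=(0,0)$ contributes $g=d$, i.e., the multiset $\{I,I\}$.

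For the separation step, I claim that for distinct divisors $m,m'\mid d$, the pairs $\{I,Z^{m}\}$ and $\{I,Z^{m'}\}$ are inequivalent. By the same permutation argument, U-equivalence would force $UZ^{m'}U^{\dagger}\approx Z^{m}$ for some unitary $U$, so the two matrices share a multiset of eigenvalues up to a global scalar. But the spectrum of $Z^{m}$ consists of the $(d/m)$-th roots of unity each of multiplicity $m$, a multiset that is preserved by unimodular scalar multiplication only when $m$ is the same, so $m=m'$. Equivalently, applying the invariant $\mathbb{I}^{(2)}_{\mathcal{M},a}$ from \eqref{invariant} to $\mathcal{M}=\{I,Z^{m}\}$ yields $4$ when $(d/m)\mid a$ and $2$ otherwise, and the location of the first jump uniquely determines $d/m$ and hence $m$. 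The main obstacle is the transitivity claim in the reduction step; the conjugation collapse and the spectral separation are routine once the Clifford orbit structure on $(\mathbb{Z}_{d})^{2}$ is pinned down.
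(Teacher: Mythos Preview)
Your overall strategy matches the paper's: both reduce an arbitrary pair to $\{I,Z^{g}\}$ with $g\mid d$ and then separate distinct divisors. The paper quotes \cite{Nielsen02} for $X^{s}Z^{t}\sim Z^{\gcd(s,t)}$ and follows with $Q_{k}$ to reach a divisor of $d$, while you argue directly via the $SL(2,\mathbb{Z}_{d})$-orbit structure on exponent pairs; for separation the paper uses precisely the invariant $\mathbb{I}^{(2)}_{\mathcal{M},d/s}$ you mention as your second option.

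There is one incorrect step. The assertion that ``no non-trivial GPM is proportional to $I$, so the only viable permutation is $I\mapsto I$'' does not follow: $U_{L}\,I\,U_{R}=U_{L}U_{R}$ is an arbitrary unitary and can perfectly well equal $M'$ up to phase, so the swap $I\mapsto M'$, $M\mapsto I$ is not excluded by your premise. Under the swap one obtains $U_{L}MU_{L}^{\dagger}\approx (M')^{\dagger}$, i.e.\ UC-equivalence of $M$ with $(M')^{\dagger}$ rather than $M'$; since $Z^{m'}\stackrel{Q_{-1}}{\sim}Z^{-m'}$ this does not alter the classification, but your spectral argument needs this patch before it goes through. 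Your alternative via $\mathbb{I}^{(2)}$ avoids the issue entirely, because that invariant is symmetric in the set elements by construction---and this is exactly the route the paper takes.
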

\begin{proof}
Actually, we only need to consider pairs $\{I,X^sZ^t\}$. Since for any $0\leq s,t<d$,
$$\begin{aligned}
\{I,X^{s}Z^{t}\}\mathop\sim\limits^{\text{\cite{Nielsen02}}} \{I,Z^{\text{gcd}(s,t)}\}
\end{aligned}$$
\emph{i.e.}, there exists a unitary operator $U$ such that $UX^sZ^tU^{\dagger}\approx Z^{\text{gcd}(s,t)}$. Meanwhile,
$$\mathcal{M}_{s}\triangleq \{I,Z^{s}\}\mathop\sim\limits^{Q_{k}}\{I,Z^{sk}\}$$
 for any $k\perp d$.

What's more, $\{I,Z^{s}\}\not\sim \{I,Z^{t}\}$ for the reason that
$$\mathbb{I}^{(2)}_{\mathcal{M}_s,\frac{d}{s}}=4,\quad \mathbb{I}^{(2)}_{\mathcal{M}_t,\frac{d}{s}}=2$$
where $s,t$ are two different factors of $d$.

Therefore, we get minimum U-equivalence classes $\mathcal{M}_{s}$ in which $s$ is invertible and $1\leq s<d$ in any $d\otimes d$ quantum system. Thus there are totally $\prod_{1\leq i\leq n}(k_{i}+1)$ different U-inequivalent GPM-pairs in $d\otimes d$ quantum system.
\end{proof}

It's amazing that the U-inequivalent GPM-pairs in $d\otimes d$ quantum system only depends on the number of factors of $d$, rather than the factors of itself. 

 \section{\label{sec:level5} Classification of GPM-triples in $p^{\alpha}\otimes p^{\alpha}$ quantum system}

Tian \emph{et al.} gave a classification of GPM triples in $d\otimes d$ quantum system when $d$ is a prime or $d\in\{4,6\}$~\cite{Tian16}. The biggest challenge to generalize the dimension from a prime to a composite is the degenerate of the operators. We use some ingenious methods including splitting the degenerate GPM operators into the form of tensor of two lower dimensional operators to find the U-equivalent relationships between two GPM-triples. Meanwhile, we analyze the special properties of some specific degenerate GPM-triples by restricting the basic operator from $X_{p^{\alpha}}, Z_{p^{\alpha}}$ into $X_{p^{\alpha}}^{p^{s}}, Z_{p^{\alpha}}^{p^{t}}$ where $0\leq s,t < \alpha$, such as Theorem \ref{thm-inequivalence}.

In this section, we will give a classification of GPM-triples in any $p^{\alpha}\otimes p^{\alpha}$ quantum system, where $p$ is a prime and $\alpha\geq 2$. Before giving a classifiction of GPM-triples in $p^{\alpha}\otimes p^{\alpha}$ quantum system, we firstly introduce two equivalence classes $[x]_{d},[[ x ]]_{d}$ where $x\in\mathbb{Z}_{d}$ and $x\geq 2$, which are defined to characterize the U-equivalence of GPM-triples more effectively, in which
\begin{align*}
\begin{aligned}
 &[x]_{d}=\{x,x^{-1},1-x,(1-x)^{-1},1-x^{-1},x(x-1)^{-1}\},\\
 &[[ x ]]_{d}=\{x,1-x^{-1},(1-x)^{-1}\}.
 \end{aligned}
 \end{align*}
where $x^{-1}$ is the inverse of $x$ in $\mathbb{Z}_{d}$. We also omit the subscript $d$ of $[x]_{d},[[x]]_{d}$ when there is no ambiguity or $d$ is exactly the dimension of quantum system. The first equivalence class is introduced by \cite{Tian16}, which is served as a partition of some kind of GPM-triples, and we introduce an extra equivalence class $[[ x ]]$ to give a more careful partition. If $x^{-1}$ or $(1-x)^{-1}$ does not exist, we just omit this element. For example, when $d=9, [3]=\{3,4,6,7\}$ and $[[3]]=\{3,7\}$.


It is natural to consider the number of elements in each $[x]$ or $[[x]]$ for counting U-inequivalence classes in $p^{\alpha}\otimes p^{\alpha}$ quantum system. When $x\not\in[2]$, all of elements in $[x]$ (or $[[x]]$) are distinct if $x \ne x^{-1}$ and $x \ne 1 - x^{-1}$. Fortunately, the case $x=x^{{-1}}$ or $x=1-x^{-1}$ comes up not so frequently: $x = x^{-1}$ only when $p = 2$ and $x\in\{2^{\alpha-1}+1,2^{\alpha-1}-1\}$ ($\alpha\geq 2$); 
\begin{equation}\label{eq-Inv}
x = 1 - x^{-1}
\end{equation}
iff $p$ has form $6k+1$. Details are as shown in Lemma \ref{LemInv}.



\begin{lemma}\label{LemInv}
For $x\in \mathbb{Z}_{p^{\alpha}}$ and $p^{\alpha}\ne 3$, there exactly exist two solutions of Equation \eqref{eq-Inv} if $p = 6k+1$ for some $k$, and there are no solutions of Equation \eqref{eq-Inv} if $p \ne 6k+1$.
\end{lemma}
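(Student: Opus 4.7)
The plan is to rewrite the defining equation $x = 1 - x^{-1}$ (which tacitly requires $x$ to be a unit mod $p^\alpha$) as the quadratic $f(x) := x^2 - x + 1 \equiv 0 \pmod{p^\alpha}$; conversely, any root of $f$ forces $x(x-1) \equiv -1$, so is automatically a unit. Thus counting solutions of the original equation reduces to counting roots of $f$ modulo $p^\alpha$.

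I would first dispose of the two small primes. For $p = 2$, note $f(x) = x(x-1) + 1$ is odd for every integer $x$, so no solution exists. For $p = 3$, multiplying by the unit $4$ rewrites $f(x) \equiv 0$ as $(2x-1)^2 \equiv -3 \pmod{3^\alpha}$; then $3 \mid 2x-1$, so the left side is divisible by $9$, whereas $-3$ is not (using $\alpha \geq 2$, since the hypothesis excludes $p^\alpha = 3$). Hence no solutions exist when $p = 3$ either.

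For odd $p \geq 5$ I would invoke Hensel's lemma to reduce the count modulo $p^\alpha$ to the count modulo $p$. The derivative is $f'(x) = 2x - 1$; the only place it vanishes mod $p$ is $x_0 \equiv (p+1)/2$, where $4f(x_0) = (2x_0-1)^2 + 3 \equiv 3 \not\equiv 0 \pmod p$. So any actual root of $f$ mod $p$ automatically has nonzero derivative, and lifts uniquely to a root mod $p^\alpha$. The problem thereby reduces to whether $-3$ is a quadratic residue mod $p$, contributing two solutions of $f$ if yes and zero if no. By standard quadratic reciprocity one finds $(-3/p) = (-1/p)(3/p) = (p/3)$, which equals $1$ exactly when $p \equiv 1 \pmod 3$, equivalently $p = 6k+1$ since $p$ is odd. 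This yields the claimed dichotomy, with the two lifted roots of $f$ giving exactly two solutions of $x = 1 - x^{-1}$.

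The only subtle point is the $p = 3$ case, where Hensel fails (at the mod-$3$ root $x \equiv 2$ one has $f'(2) \equiv 0 \pmod 3$), so one must rule out mod-$9$ solutions by the direct divisibility argument above; the exclusion $p^\alpha \neq 3$ is precisely what makes this go through. Everything else is routine number theory.
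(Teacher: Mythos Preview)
Your argument is correct. It differs from the paper's proof mainly in how the lift from $\mathbb{Z}_p$ to $\mathbb{Z}_{p^\alpha}$ is carried out. The paper proceeds by explicit induction on $\alpha$: given the two solutions $s_1,s_2$ modulo $p^\alpha$, it writes candidate lifts $y=c_1p^\alpha+s_1$, $y'=c_2p^\alpha+s_2$ and solves for the unique pair $(c_1,c_2)$, tracking along the way the auxiliary fact that the solutions are never $\equiv 2\pmod p$ (needed to ensure $s_1-s_2=2s_1-1$ is a unit in the lifting step). You instead invoke Hensel's lemma directly after checking that the only critical point $x_0\equiv 2^{-1}$ of $f(x)=x^2-x+1$ is not a root modulo $p$ for $p\ge 5$; this is cleaner and avoids the side hypothesis entirely. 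Both routes share the same base ingredients---the substitution $(2x-1)^2\equiv -3$ and the quadratic-residue determination of when $-3$ is a square modulo $p$---and both handle $p=3$ separately, where Hensel degenerates and the exclusion $p^\alpha\ne 3$ is used. The paper's extra conclusion that the roots avoid the residue class $2\pmod p$ is not used elsewhere in the paper, so nothing is lost by your shorter route.
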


\begin{proof}
We will prove a stronger conclusion: If $p = 6k+1$ for some $k$, there exactly exist two solutions of Equation \eqref{eq-Inv} and the two solutions are not in the set $\{x|x\equiv 2\text{ (mod }p)\}$, otherwise there are no solutions of Equation \eqref{eq-Inv}. 

Firstly, there are no solutions of Equation \eqref{eq-Inv} if $p = 2$, since $ x\equiv 1\text{ (mod } 2)$ when $p = 2$ and $x$ is invertible. 

Suppose $x$ is the solution of Equation \eqref{eq-Inv}. Since $p^{\alpha}\ne 3$, then $x\ne x^{-1}$ and there are even solutions of Equation \eqref{eq-Inv}.

When $p>2$ and $p^{\alpha}\ne 3$, we will prove this conclusion by induction on $\alpha$. Firstly, when $\alpha=1$, we have $p\geq 5$. Thus $p$ has form $6k+1$ or $6k+5$ for some $k$. Observe that Equation 
\begin{equation}
(2x-1)^{2}\equiv -3\text{ (mod }{p^{\alpha}})
\label{Eq-Quadratic}
\end{equation}
is equivalent to Equation \eqref{eq-Inv}, thus we will solve Equation \eqref{Eq-Quadratic} instead of Equation \eqref{eq-Inv}. By quadratic residue theorem we know $-3$ is a quadratic residue of $p$ iff $p=6k+1$ for some $k$. Thus when $\alpha$ equals 1, there are exactly two solutions for Equation \eqref{Eq-Quadratic} when $p=6k+1$, and no solutions for Equation \eqref{Eq-Quadratic} when $p=6k+5$. It is easy to check that $2$ is not the solution. Meanwhile, when $p = 3$ and $\alpha = 2$ there are no solutions of Equation \eqref{eq-Inv}.

Suppose the conclusion holds for some $\alpha$ where $\alpha\geq 1$ or $p = 3, \alpha\geq 2$, and we will prove the conclusion also holds for $\alpha+1$.

For $x\in\mathbb{Z}_{p^{\alpha+1}}$, suppose $y$ is the solution of Equation \eqref{eq-Inv}, then $y'\triangleq y^{-1}$ is also the solution of Equation \eqref{eq-Inv}. Let 
$$s_{1}\triangleq y \text{ (mod }p^{\alpha}),s_{2}\triangleq y'\text{ (mod }p^{\alpha}).$$
It is easy to check $s_{1},s_{2}$ are two different solutions of Equation \eqref{eq-Inv} for $x\in\mathbb{Z}_{p^{\alpha}}$. 
Thus $y,y'$ have forms
\begin{align}\label{soluAlp1}
\begin{aligned}
& y = c_{1}p^{\alpha}+s_1\text{ (mod }p^{\alpha+1})\\
& y' = c_{2}p^{\alpha}+s_2\text{ (mod }p^{\alpha+1})
\end{aligned},
\end{align}
for some $c_{1},c_{2}$. Since $s_{1}, s_{2}\not\in \{x|x\equiv 2 \text{ (mod }p)\}$, and $s_1-s_2\equiv 2s_1-1\text{ (mod }p)$, then $s_{1}, s_{2}\not\in \{x|x\equiv 2^{-1} \text{ (mod }p)\}$ and $s_{1}-s_{2}$ is invertible in $\mathbb{Z}_{p}$. By solving Equation \eqref{eq-Inv} and \eqref{soluAlp1}, there only exists one pair $y,y'$ in which $y,y'<p^{\alpha+1}$, and the value of $y,y'$ are as follows,
$$\begin{aligned}
&y = (\sigma-s_1)(s_1-s_2)^{-1}p^{\alpha}+s_1\text{ (mod }p^{\alpha+1}),\\
& y' = (-1-(\sigma-s_1)(s_1-s_2)^{-1})p^{\alpha}+s_2\text{ (mod }p^{\alpha+1}),
\end{aligned}$$
where the inverse of $(s_{1}-s_{2})$ is operating in the field $\mathbb{Z}_{p}$.
Thus there are exactly two solutions in $\mathbb{Z}_{p^{\alpha + 1}}$ iff $p = 6k+1$. Since $y\equiv s_1\text{ (mod }p)$, then $y,y'\not\in\{x|x\equiv 2\text{ (mod }p)\}$. Thus we are done!
\end{proof}

Next we will firstly give a classification of GPM-triples in $p^2\otimes p^2$ quantum system for simplification. The U-transformation operators are mainly composed of Clifford operators and some simple GPMs. Nevertheless, there exist some GPM-triples for which all of the above operators can not serve as its unitary transformation tools, fortunately, we can deal with these GPM-triples by splitting the GPM as a tensor of two separate parts. Afterwards, we prove the completeness of our classification mainly by invariants \eqref{invariant} and Theorem \ref{thm-inequivalence}.
\begin{theorem}
There are exactly $\lfloor\frac{5}{6}p^{2}\rfloor + \lfloor \frac{p-2}{6}+(-1)^{\lfloor \frac{p}{3}\rfloor}\frac{p}{3}\rfloor + 3$ U-inequivalent GPM-triples in $p^{2}\otimes p^{2}$ quantum system for prime $p>2$. Furthermore, the representing GPM-triples are as follows,
\begin{itemize}
\item[\textcircled{1}] $\mathcal{M}_{s_{1}}=\{I,Z,X^{s_{1}}\}$, where $1\leq s_{1} \leq \lfloor\frac{p^{2}}{2}\rfloor$.
\item[\textcircled{2}] $\mathcal{N}_{k,\hat{s}_{2}}=\{I,Z,X^{kp}Z^{\hat{s}_{2}}\}$, where $1\leq k\leq \frac{p-1}{2}, \hat{s}_{2}$ is the minimal element of $[[s_{2}]]_{p},$ and $ 2\leq s_{2}\leq p-1$. There are totally $\lfloor\frac{p-1}{2}\rfloor(2\lfloor\frac{p}{6}\rfloor+1)$ U-inequivalent classes in such forms.
\item[\textcircled{3}] $\mathcal{N}_{\hat{s}_{3}}=\{I,Z,Z^{\hat{s}_{3}}\}$, where $\hat{s}_{3}$ is the minimal element of $[s_{3}]$, and $2\leq s_{3}<p^{2}$. There are totally $\lfloor\frac{p^{2}-2p+1}{6}\rfloor+\lfloor\frac{p-1}{2}\rfloor+1$ U-inequivalent classes in such forms.
\item[\textcircled{4}] $\mathcal{M}=\{I,Z^{p},X^{p}\}$.
\item[\textcircled{5}] $\mathcal{W}_{\hat{s}_{4}}=\{I,Z^{p},Z^{\hat{s}_{4}p}\}$, where $\hat{s}_{4}$ is the minimal element of $[s_{4}]_{p}$ and $2\leq s_{4}\leq p-1$. There are totally $\lfloor\frac{p}{6}\rfloor+1$ LU-inequivalent classes in such forms.
\end{itemize}
\label{thm3}
\end{theorem}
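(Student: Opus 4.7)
The plan is to first normalize any GPM-triple to the form $\{I, M_1, M_2\}$ using the group structure of GPMs, then perform a case analysis on the degeneracy of $M_1, M_2$ to produce the five canonical families $\textcircled{1}$--$\textcircled{5}$, and finally use the invariants of Section~\ref{sec:level3-2} together with Theorem~\ref{thm-inequivalence} to prove minimality.

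I would split on the orders of $M_1, M_2$. \textbf{Case A: at least one of $M_1, M_2$ has order $p^2$.} The Clifford operators $P, R, V, Q_k$ of Section~\ref{sec:level3-1} bring such an element to $Z$, and it remains to classify the other element up to the stabilizer of $Z$ in the normalizer, enlarged by the permutations of the triple. A further split yields $\textcircled{1}$ (other element of form $X^{s_1}$), $\textcircled{2}$ (other element of form $X^{kp}Z^{s_2}$, where $W$ from Lemma~\ref{permutation} is essential for reducing $s_2$), and $\textcircled{3}$ (both non-identity elements in the $Z$-tower, where all six permutations of the triple act on the exponent, giving the orbit $[s_3]$). \textbf{Case B: both $M_1, M_2$ have order dividing $p$.} I would reduce inside the abelian subgroup generated by $X^p, Z^p$, which has the same group-theoretic shape as GPMs in a $p \otimes p$ system, yielding the two canonical triples $\textcircled{4}$ and $\textcircled{5}$; the mixed triple $\{I, Z^p, X^p\}$ of $\textcircled{4}$ is handled by a tensor-splitting argument that separates it from the all-$Z$ family of $\textcircled{5}$.

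For inequivalence across the five families and across different representatives inside each family, the invariants from~\eqref{invariant} and~\eqref{invariant2} separate most pairs since they detect commutator structure, trace spectra, and the size of the abelian group generated by the triple. The borderline triples in $\textcircled{2}$ where the invariants coincide are exactly the sign-flipped ones $\{I, Z^{p^s}, X^{\pm kp^t}Z^{t'p^s}\}$ covered by Theorem~\ref{thm-inequivalence}, which I would invoke to confirm their inequivalence and hence the minimality of the $[[s_2]]$-representatives.

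The main obstacle will be the precise counting, especially in $\textcircled{2}$ and $\textcircled{3}$. Generically $|[[s_2]]_p| = 3$, but the orbit collapses when $s_2 = 1 - s_2^{-1}$ in $\mathbb{Z}_p$; by Lemma~\ref{LemInv} this equation has solutions exactly when $p \equiv 1 \pmod{6}$, which is the origin of the alternating correction $(-1)^{\lfloor p/3\rfloor} p/3$ in the final formula. A parallel but longer accounting with generic orbit size $6$ is required for $[s_3]$ in $\textcircled{3}$, again calibrated by Lemma~\ref{LemInv} to detect the exceptional fixed points, while the sizes for $\textcircled{1}, \textcircled{4}, \textcircled{5}$ follow from straightforward orbit counting under the $Q_k$-action. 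Summing the five contributions and simplifying yields the claimed total $\lfloor \tfrac{5}{6}p^2\rfloor + \lfloor \tfrac{p-2}{6} + (-1)^{\lfloor p/3\rfloor}\tfrac{p}{3}\rfloor + 3$.
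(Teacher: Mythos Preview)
Your proposal follows essentially the same route as the paper: normalize via the pair classification to $\{I,Z,\ast\}$ or $\{I,Z^p,\ast\}$, case-split into the five families using Clifford operators (and a tensor-splitting for \textcircled{4}), then separate representatives with the invariants $\mathbb{I}^{(1)},\mathbb{I}^{(2)},\mathbb{I}^{(3)}$ and invoke Theorem~\ref{thm-inequivalence} for the residual sign-flipped pairs in \textcircled{2}, with Lemma~\ref{LemInv} governing the exceptional orbit sizes in the count. One small correction: in the $p^2$ case the operator $W$ of Lemma~\ref{permutation} is not actually needed for \textcircled{2} (with $s=0$ the range $1\le k<p^s$ is empty), and the paper carries out that reduction entirely with the Clifford operators $P$, $V$, $Q_k$.
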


\begin{proof}
By the similar classification of Theorem \ref{thm-1}, all of GPM-triples can be divided into two cases: $\{I,Z,X^{s}Z^{t}\},\{I,Z^{p},X^{s}Z^{t}\}$. We find the representing GPM-triples of Theorem \ref{thm3} by Table \ref{EqGPM}. For example, for GPM-sets $\{I,Z,X^{s}Z^{t}\}$ where $s\perp p$, we can use U-equivalence transformation \eqref{eqX} followed by \eqref{eqIX} to transform it to be the class $\mathcal{M}_{s}(1\leq s\leq \frac{p^{2}}{2})$. This belongs to case \textcircled{1} in Theorem \ref{thm3}.

To prove the classification of the theorem is minimum, we need to show:

(1) The GPM-triples are U-inequivalent when the parameters are distinct for the internal classes of each cases.

(2) Every two cases are U-inequivalent.

Here, we firstly prove the U-inequivalence for internal classes of case \textcircled{2}, and leave the other cases in Appendix \ref{appendix1}. Then, we show the U-inequivalence between any two cases.

\textbf{Case \textcircled{2}} Since
\begin{equation}
\small
\mathbb{I}_{\mathcal{N}_{k,s}}^{(1)}=48(1-\cos (2kp\pi/p^{2}))
\label{eq3.5}
\end{equation}
we have 
$$\mathbb{I}_{\mathcal{N}_{k,s}}^{(1)}\ne \mathbb{I}_{\mathcal{N}_{k',s'}}^{(1)},$$ 
where $1\leq k<k'\leq \lfloor\frac{p}{2}\rfloor$. Notice the value of $\mathbb{I}_{\mathcal{N}_{k,s}}^{(1)}$ in Equation \eqref{eq3.5} depends only on $k$, thus $\mathcal{N}_{k,s}\not\sim \mathcal{N}_{k',s'}$, where $2\leq k<k'\leq \frac{p-1}{2}$.

Since
$$\mathbb{I}_{\mathcal{N}_{k,s}^{p},s}^{(3)}\ne \mathbb{I}_{\mathcal{N}_{k,s'}^{p},s}^{(3)}$$
 for any $s'\not\in[s]_{p}$, we have $\mathcal{N}_{k,s}\not\sim \mathcal{N}_{k,s'}$ where $s'\not\in[s]_{p}$. Meanwhile, $[2]_{p}=[[ 2]]_{p}=\{2,\frac{p+1}{2},p-1\}$, thus we need only prove $\mathcal{N}_{k,s}$ is U-inequivalent to $\mathcal{N}_{k,\text{Inv}_p(s)}$ for $s>2$ and $s\perp p$. In this case, $\mathcal{N}_{k,\text{Inv}_p(s)}$ is U-equivalent to $\mathcal{N}_{-k,s}$ by U-transformations \eqref{eqXpZ} and \eqref{eqXpZ2}. By Theorem \ref{thm-inequivalence} (Set $\alpha=2,s=0,t=1$ for the theorem \ref{thm-inequivalence}), we find $\mathcal{N}_{k,t}\sim\mathcal{N}_{-k,t}$ only if $t\in[[2]]_{p}$ since $p>2$.

Next, we need to count the number of U-inequivalence classes with form \textcircled{2}. Suppose there are $\Delta$ U-inequivalence classes for certain $k$. It's easy to find $\Delta=0$ when $p$ equals $2$, and there is only one GPM-triple $\{I,Z,X^{3}Z^{2}\}$ when $p$ equals 3.

For $p\geq 5$, if $p=6k+5$, each equivalence class $[[ s]]_{p}$ has three elements by Lemma \ref{LemInv}, \emph{i.e.}, $p=2+3\Delta$, thus $\Delta=2k+1$. If $p=6k+1$, there are two classes $[[ s]]_p$ and $[[ s^{-1}]]_{p}$ which has only one element respectively (By Lemma \ref{LemInv}), \emph{i.e.}, the three elements in $[[s]]_{p}$ are equal to each other. Then we have $p=2+3(\Delta-2)+2$, thus $\Delta=2k+1$ in this case.

\begin{widetext}
\begin{flalign}
&\{I,Z^{a},X^{s}Z^{t}\}\mathop\sim\limits^{P^{-\mu}}\{I,Z^{a},X^{s}\}, \text{ where }\mu=t\cdot s^{-1} \text{ if }s\perp p, \text{ and } \mu=\frac{t}{p}\cdot (\frac{s}{p})^{-1} \text{ if } p|t \text{ and } p|s. \label{eqX}\\
&\{I,Z,X^{s}\}\mathop\sim\limits^{R}\{I,Z^{s},X^{-1}\}\widesim{P^{-s}}\{I,Z^{s},X^{-1}Z^{s}\}\sim\{I,X^{-1},Z^{-s}\}\widesim{R^{-1}}\{I,Z,X^{-s}\} \text{ for any } s.\label{eqIX}\\
&\{I,Z,X^{kp}Z\}Z^{-1}\approx\{Z^{-1},I,X^{kp}\}\mathop\sim\limits^{Q_{-1}}\{I,Z,X^{-kp}\}\text{ for any }k.\label{eqXp1}\\
&\{I,Z,X^{kp}Z^{t}\}\widesim{P^{-k^{-1}}}\{I,Z,X^{kp}Z^{t-p}\},\text{ where }k\perp p.\label{eqXpZ2}\\
&\begin{cases}
\{I,Z,X^{kp}Z^{s}\}\mathop\sim\limits^{Q_{-1}} \{I,Z^{-1},X^{-kp}Z^{-s}\}\sim \{I,Z,X^{-kp}Z^{1-s}\}\text{ for any } k,s.\\
\{I,Z,X^{kp}Z^{s}\}\widesim{Q_{s^{-1}}}\{I,Z^{s^{-1}},X^{skp}Z\}\widesim{V^{-skp}}\{I,Z,X^{-kp}Z^{s^{-1}}\},\text{ where }s\perp p.
\label{eqXpZ}
\end{cases}\\
&\{I,Z^{p},X^{s}\}\mathop\sim\limits^{R}\{I,Z^{s},X^{-p}\}\widesim{Q_{s^{-1}}}\{I,Z,X^{-sp}\}, \text{ where }s\perp p.\label{eqZp1}\\
&\begin{cases}
\{I,Z^{a},Z^{ka}\}\widesim{Q_{k^{-1}}} \{I,Z^{a},Z^{ak^{-1}}\},\text{ where } k\perp p.\\
\{I,Z^{a},Z^{ka}\}\mathop\sim\limits^{Q_{-1}}\{I,Z^{-a},Z^{-ka}\}\sim\{I,Z^{a},Z^{(1-k)a}\}
\label{eqZ}
\end{cases}\\
&\{I,Z^{p},X^{kp}Z^{t}\}\widesim{V^{-kpt^{-1}}}\{I,Z^{p},Z^{t}\}\widesim{Q_{t^{-1}}} \{I,Z,Z^{pt^{-1}}\}=\{I,Z,Z^{pt^{-1}}\}\label{eqXp2},\text{ where }t\perp p.\\
&\{I,Z_{p^2}^{p},X_{p^2}^{kp}\}=\{I_{p}\otimes I_{p},I_{p}\otimes Z_{p},X_{p}^{k}\otimes I_{p}\}\widesim{Q_{k}\otimes I_{p}}\{I_{p}\otimes I_{p},I_{p}\otimes Z_{p},X_{p}\otimes I_{p}\}=\{I,Z_{p^2}^{p},X_{p^2}^{p}\}\text{ for any }k.\label{eqXp}
\end{flalign}
\end{widetext}

\begin{widetext}
\begin{center}
\begin{table}[H]
\caption{\mbox{Classification of GPM triples in $p^2\otimes p^{2}$ quantum system.}}
\begin{tabular}{ c | c | c | c |c |c}
	\hline\hline
    GPM-sets & \multicolumn{2}{c|}{some cases of $s$ and $t$} & U-equivalences & U-equivalence classes & In Th. \ref{thm3}\\
    \hline
    \multirow{4}{*}{$\{I,Z,X^{s}Z^{t}\}$} &\multicolumn{2}{c|}{$s\perp p$ or $(p|s,p|t)$ or $t=0 $} & \eqref{eqX},\eqref{eqIX} & $\mathcal{M}_{s}(1\leq s\leq \frac{p^{2}}{2})$ & \textcircled{1}\\
    \cline{2-6}
    & \multirow{2}{*}{$s=kp(k\perp p),t\perp p$} & $t=1$ & \eqref{eqXp1},\eqref{eqIX} & $\mathcal{M}_{kp}(1\leq kp\leq p^{2}/2) $ &\textcircled{1}\\
    \cline{3-6}
    & & $t\geq 2$ & \eqref{eqXpZ},\eqref{eqXpZ2} & $\mathcal{N}_{k,\hat{t}}$($\hat{t}$: Min of $[[ t]]_{p}$) & \textcircled{2}\\
    \cline{2-6}
    &\multicolumn{2}{c|}{$s=0$} & \eqref{eqZ} & $\mathcal{N}_{\hat{t}}$($\hat{t}$: Min of $[t^{-1}]$) & \textcircled{3}\\
    \hline
    \multirow{5}{*}{$\{I,Z^{p},X^{s}Z^{t}\}$} & \multicolumn{2}{c|}{$s\perp p$} & \eqref{eqX},\eqref{eqZp1},\eqref{eqIX} & $\mathcal{M}_{sp}(1\leq s\leq \frac{p^{2}}{2})$ & \textcircled{1}\\
    \cline{2-6}
    & \multirow{2}{*}{$s=0$} & $t\perp p$ & \eqref{eqXp2},\eqref{eqZ} & $\mathcal{N}_{a}$($a$: Min of $[pt^{-1}]$) & \textcircled{3}\\
    \cline{3-6}
    && $t=kp(k\perp p)$ & \eqref{eqZ} & $\mathcal{W}_{\hat{k}}$ ($\hat{k}$: Min of $[ k]_{p}$)& \textcircled{5}\\
    \cline{2-6}
    & \multirow{2}{*}{$s=kp$} & $t\perp p$ & \eqref{eqXp2},\eqref{eqZ} &$\mathcal{N}_{b}$($b$: Min of $[pt^{-1}]$) & \textcircled{3}\\
    \cline{3-6}
    && $p|t $ or $t=0$ & \eqref{eqX},\eqref{eqXp}&$\mathcal{M}$ &\textcircled{4}\\
\hline\hline
\end{tabular}
\label{EqGPM}
\end{table}
\end{center}\end{widetext}

Thus there are exactly $\frac{p-1}{2}\cdot(2\lfloor\frac{p}{6}\rfloor+1)$ U-inequivalence classes for any $p\geq 5$. When $p=3$ this result also holds.

Since there exist at least one of invariants which are distinct for any two cases of Theorem \ref{thm3} by Table \ref{GPMInequivalence}, then all of cases in Theorem \ref{thm3} are U-inequivalent. Thus the classification of Theorem \ref{thm3} are complete.
\begin{table}
\centering
\caption{Any two cases in Theorem \ref{thm3} are U-inequivalent (When $p=3$ and $\hat{s}_{4}=2, $ we have $ A=33$, otherwise $A=27$.)}
\begin{tabular}{ m{2.5cm}<{\centering} | m{2.5cm}<{\centering} | m{2cm}<{\centering}  |m{1cm}<{\centering} }
	\hline\hline
	Triples $\mathcal{T}$ & $\mathbb{I}_{\mathcal{T}}^{(1)}$ & $\mathbb{I}_{\mathcal{T},p}^{(2)}$ & $\mathbb{I}_{\hat{s}_{4},\mathcal{T}}^{(3)}$\\
	\hline
	$\mathcal{M}_{s_{1}}$ & $48(1-\cos\frac{2\pi s_1}{p^2})$ & 5 (for $p|s_{1})$ & -- \\
	\hline
	$\mathcal{N}_{k,\hat{s}_{2}}$ ($\hat{s}_{2}$: Min of $[[ s_{2}]]_{p}$) & $48(1-\cos\frac{2\pi k}{p})$ & 3 & -- \\
	\hline
    $\mathcal{N}_{\hat{s}_{3}}$ ($\hat{s}_{3}$: Min of $[s_{3}]$) & 0 & {$\leq 5$} & $\geq 27$  \\ \hline
    $\mathcal{M}$ & 0 & 9 & $A$ \\ \hline
    $\mathcal{W}_{\hat{s}_{4}}$($\hat{s}_{4}$: Min of $[s_{4}]_{p}$) & 0 & 9 & $>A$
     \\ \hline\hline
\end{tabular}
\label{GPMInequivalence}
\end{table}

\end{proof}

\begin{figure}[!ht]
\centering
\includegraphics[width=0.5\textwidth]{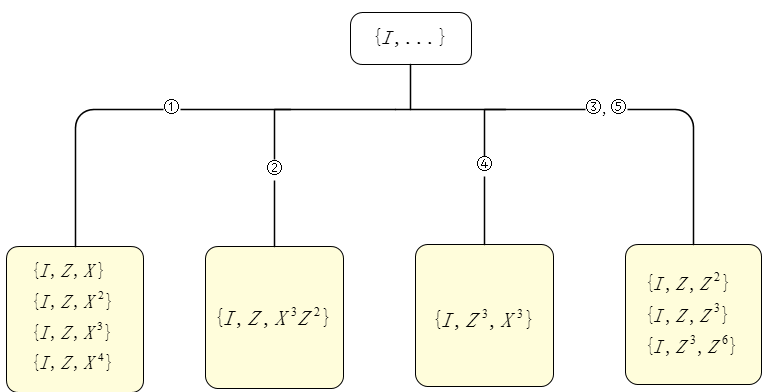}
\caption{Classification of GPM-triples in $9\otimes 9$ quantum system}\label{FigCD9}
\end{figure}

Our classification also holds for $4\otimes 4$ quantum system when we remove the trivial GPM-triples. To check the correctness of our classification, we enumerate the classification results of GPM-triples in $9\times 9$ quantum system in Figure \ref{FigCD9}. There are exactly 9 U-inequivalent GPM-triples in $9\otimes 9$ quantum system via Theorem \ref{thm3}, and all of triples in Figure \ref{FigCD9} are U-inequivalent via Table \ref{tableCD9}.
\begin{table}
\begin{center}
\caption{Invariants of the minimal triples in $9\otimes 9$ system. (All of the decimals are the approximate value.)}
\begin{tabular}{ c | c | c | c |c }
	\hline\hline
	Triples & $\mathbb{I}^{(1)}$ & $\mathbb{I}_{\mathcal{M},3}^{(2)}$ & $\mathbb{I}_{2,\mathcal{M}}^{(3)}$ & $\mathbb{I}_{2,\mathcal{M}^{3}}^{(3)}$ \\ \hline
	$\{I,Z,X\}$ & 11.23 & 3 & 27  &33 \\ \hline
	$\{I,Z,X^{2}\}$ & 39.67 & 3 & 27 &33 \\ \hline
    $\{I,Z,X^{3}\}$ & 72 & 5 & 29  & 141\\ \hline
    $\{I,Z,X^{4}\}$ & 93.11 & 3 & 27 &33 \\ \hline
    $\{I,Z,X^{3}Z^{2}\}$ & 72 & 3 & 27& 81 \\ \hline
    $\{I,Z^{3},X^{3}\}$ & 0 & 9 & 33  &729\\ \hline
    $\{I,Z,Z^{2}\}$ & 0 & 3 & 35 &81 \\ \hline
    $\{I,Z,Z^{3}\}$ & 0 & 5 & 31  &141\\ \hline
    $\{I,Z^{3},Z^{6}\}$ & 0 & 9 &  81 &729 \\ \hline\hline
\end{tabular}
\label{tableCD9}
\end{center}
\end{table}


Now we consider some general situations by generalizing the quantum system to $p^{\alpha}\otimes p^{\alpha}$, where $p$ is a prime.

\begin{theorem}
\begin{itemize}The GPM-triples in $p^{\alpha}\otimes p^{\alpha}$ system can be classified into the following classes, where $p$ is a prime and $p>2$.

\vspace{-0.2cm}
\item[\textcircled{1}]$\mathcal{M}_{(s_{1},t_{1})}=\{I,Z^{p^{s_{1}}},X^{p^{t_{1}}}\}$, where $1\leq s_{1}\leq t_{1}<\alpha$ and $s_{1}+t_{1}\geq \alpha$.
\vspace{-0.2cm}
\item[\textcircled{2}]$\mathcal{M}_{(k_{1},s_{2},t_{2})}=\{I,Z^{p^{s_{2}}},X^{k_{1}p^{t_{2}}}\}$, where $0\leq s_{2}\leq t_{2}<\alpha, s_{2}+t_{2}<\alpha, 1\leq k_{1}\leq \frac{p^{\alpha - s_{2} -t_{2}}}{2}$ and $k_{1}\perp p$.
\vspace{-0.2cm}
\item[\textcircled{3}]$\mathcal{N}_{(s_{3},\hat{t}_{3})}=\{I,Z^{p^{s_{3}}},Z^{p^{s_{3}}\hat{t}_{3}}\}$, where $\hat{t}_{3}$ is the minimal element of $[t_{3}]_{p^{\alpha-s_{3}}}, 0\leq s_{3}<\alpha,2\leq t_{3}<p^{\alpha - s_{3}}$.
\vspace{-0.2cm}
\item[\textcircled{4}]$\mathcal{N}_{(k_{2},s_{4},t_4,\hat{t}_{4}')}=\{I,Z^{p^{s_{4}}},X^{k_{2}p^{t_4}}Z^{p^{s_{4}}\hat{t}_{4}'}\}$, where $\hat{t}_{4}'$ is the minimal element of $[[ t_4']]_{t_4-s_{4}}, 0\leq s_{4}<t_4<\alpha, s_{4}+t_4<\alpha,1\leq k_{2}\leq \frac{p^{\alpha -s_{4}-t_4}}{2}, 2\leq t_4'<p^{t_4-s_{4}}$ and $k_{2}\perp p$.
\vspace{-0.2cm}
\item[\textcircled{5}]$\mathcal{N}_{(s_{5},t_5,\hat{t}_{5}')}=\{I,Z^{p^{s_{5}}},X^{p^{t_5}}Z^{p^{s_{5}}\hat{t}_{5}'}\}$, where $\hat{t}_{5}'$ is the minimal element of $[t_{5}']_{p^{t_{5}-s_{5}}}, 0\leq s_{5}<t_5<\alpha,s_{5}+t_5\geq \alpha, 2\leq t_5'<p^{t_5 -s_{5}}$.
\end{itemize}
\label{thm4}
When $p$ equals 2, the classifications are the same as the above results except case \textcircled{4}, which are distinct when $s+t+1=\alpha$. We list the new U-inequivalent classes of case \textcircled{4} for $2^{\alpha}\otimes 2^{\alpha}$ system in the following.

 \textcircled{4}.1 $\mathcal{N}_{(k_{2},s_{4},t_4,\hat{t}_{4}')}=\{I,Z^{2^{s_{4}}},X^{k_{2}2^{t_4}}Z^{2^{s_{4}} \hat{t}_{4}'}\}$, where $\hat{t}_{4}'$ is the minimal element of $[[ t_{4}']]_{2^{t_{4}-s_{4}}}, 0\leq s_{4}<t_4<\alpha, s_{4}+t_4+1<\alpha,1\leq k_{2}< 2^{\alpha -s_{4}-t_4-1}, 2\leq t_4'<2^{t_4-s_{4}}$ and $k_{2}\perp p$.

 \textcircled{4}.2 $\mathcal{N}_{(k,s,t,\hat{t}')}'=\{I,Z^{2^s},X^{k2^{t}}Z^{2^{s}\hat{t}'}\}$, where $\hat{t}'$ is the minimal element of $[t']_{2^{t-s}},0\leq s<t<\alpha, s+t+1=\alpha,1\leq k\leq 2^{\alpha -s-t-1},2\leq t'<2^{t-s}$ and $k\perp p$.

 Meanwhile, for any $p^{\alpha} \otimes p^{\alpha}$ quantum system, we find that the number of U-inequivalent triples are about $\frac{(\alpha + 3)}{6}p^{\alpha} + O(\alpha p^{\alpha-1})$.
\end{theorem}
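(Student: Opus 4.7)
The plan is to run the same three-stage argument as in Theorem \ref{thm3}---normalization, irredundance, enumeration---but now exploiting the full operators $W \in \mathcal{C}(t,s)$ of Lemma \ref{permutation} together with the necessary condition of Theorem \ref{thm-inequivalence} so as to handle every exponent $\alpha \geq 2$ uniformly. After an initial Clifford reduction (analogous to the opening of Theorem \ref{thm-1}, using $R$ to put the second generator into $Z$-form and $Q_{k}$ to rescale its prime-to-$p$ unit), every GPM-triple takes the form $\{I, Z^{p^{s}}, X^{a}Z^{b}\}$; the five canonical families \textcircled{1}--\textcircled{5} then correspond respectively to whether $a=0$ (family \textcircled{3}), whether $s + v_{p}(a) \geq \alpha$ (the ``boundary'' families \textcircled{1}, \textcircled{5}) or $< \alpha$ (families \textcircled{2}, \textcircled{4}), and whether $p^{s}\mid b$ (families \textcircled{1}, \textcircled{2}) or not (families \textcircled{4}, \textcircled{5}).

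For the normalization stage, I would apply $P^{j}, V^{j}, R, Q_{k}$ as in the U-equivalences \eqref{eqX}--\eqref{eqXp} of Table \ref{EqGPM}, now parametrized by general $\alpha$. The genuinely new ingredient beyond Theorem \ref{thm3} is $W$: by Lemmas \ref{permutation} and \ref{LemGroup}, conjugation by $W$ shifts the coefficient in front of $X^{p^{t}}$ by any multiple of $p^{\alpha - s}$ while fixing $Z^{p^{s}}$, which is exactly what bounds the prime-to-$p$ coefficients $k_{1}, k_{2}$ to $[1, p^{\alpha - s - t}/2]$ in families \textcircled{2} and \textcircled{4}. The orbit notations $[\,\cdot\,]$ and $[[\,\cdot\,]]$ arise as the orbits of the residual parameter $t'$ under the generators $t' \mapsto (t')^{-1},\ t' \mapsto 1 - t',\ t' \mapsto 1 - (t')^{-1}$, realized by the same compositions used in \eqref{eqZ} and \eqref{eqXpZ}.

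For irredundance, separation across families is handled by the invariants \eqref{invariant} and \eqref{invariant2}, extending Table \ref{GPMInequivalence}: $\mathbb{I}^{(1)}$ vanishes exactly on the abelian families \textcircled{3}, \textcircled{5}, while $\mathbb{I}^{(2)}$ together with the twisted invariants $\mathbb{I}^{(3)}_{a, \mathcal{M}^{t}}$ at powers $t = p^{r}$ read off the $p$-adic exponents $s, t$. Separation within a family is also invariant-based except at one subtle point inside \textcircled{4}: that $\mathcal{N}_{(k_{2}, s_{4}, t_{4}, t')} \sim \mathcal{N}_{(-k_{2}, s_{4}, t_{4}, t')}$ forces $t' \in \{2,\, (p^{t_{4}-s_{4}}+1)/2,\, p^{t_{4}-s_{4}}-1\} = [[2]]_{p^{t_{4}-s_{4}}}$ is exactly Theorem \ref{thm-inequivalence} with its $(s,t)$ replaced by $(s_{4}, t_{4})$. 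The exceptional clause (c) of Theorem \ref{thm-inequivalence}, active only for $p = 2$ and $s + t + 1 = \alpha$, automatically identifies $k \leftrightarrow -k$ in that regime, and is precisely what forces the separate subfamily \textcircled{4}.2 (using the smaller orbit $[t']_{2^{t-s}}$) in place of $[[t']]_{2^{t-s}}$ for \textcircled{4}.1.

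Finally, the asymptotic count is obtained family-by-family: \textcircled{2} contributes $(p^{\alpha}-1)/2 + O(p^{\alpha - 1}) = 3p^{\alpha}/6 + O(p^{\alpha-1})$, with the main term coming already from the $s_{2}=0$ row; \textcircled{3} contributes $p^{\alpha}/6 + O(p^{\alpha - 1})$, since generic $[\,\cdot\,]$-orbits have size $6$ by Lemma \ref{LemInv}; \textcircled{4} contributes $(\alpha - 1)p^{\alpha}/6 + O(p^{\alpha - 1})$, dominated by $s_{4}=0$ with $t_{4}$ running from $1$ to $\alpha - 1$ (each factor gives $\tfrac{(p-1)p^{\alpha - t_{4} - 1}}{2}\cdot \tfrac{p^{t_{4}}}{3}$); families \textcircled{1} and \textcircled{5} contribute only $O(\alpha p^{\alpha - 1})$. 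These add up to $(\alpha + 3)p^{\alpha}/6 + O(\alpha p^{\alpha - 1})$. The main obstacle will be the normalization stage: the branching on the joint $p$-adic data $(s, v_{p}(a), v_{p}(b), \alpha)$ has many subcases, and several of them require iterated applications of $W, V^{j}, P^{j}$ whose composite effect on the exponents must be tracked carefully to guarantee that the canonical form is well-defined, i.e., that distinct reduction chains cannot land in two different listed forms.
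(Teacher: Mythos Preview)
Your proposal is correct and follows essentially the same approach as the paper's proof in Appendix~\ref{appendix2}: reduction to canonical forms via the Clifford operators together with the permutation $W$ of Lemma~\ref{permutation} (the paper's relations \eqref{eq-a1}--\eqref{eq-b2}), separation of classes using the invariants \eqref{invariant}--\eqref{invariant2} plus Theorem~\ref{thm-inequivalence} for the $k\leftrightarrow -k$ ambiguity inside family~\textcircled{4}, and a family-by-family enumeration. One small correction worth noting: $\mathbb{I}^{(1)}$ also vanishes on family~\textcircled{1} (since $s_{1}+t_{1}\ge\alpha$ makes $X^{p^{t_{1}}}$ and $Z^{p^{s_{1}}}$ commute), so the three commuting families \textcircled{1}, \textcircled{3}, \textcircled{5} must be separated by $\mathbb{I}^{(2)}$ and $\mathbb{I}^{(3)}$ as the paper does---but you already invoke those invariants, so this is bookkeeping rather than a gap.
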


\begin{figure}
\center
\includegraphics[width=0.5\textwidth]{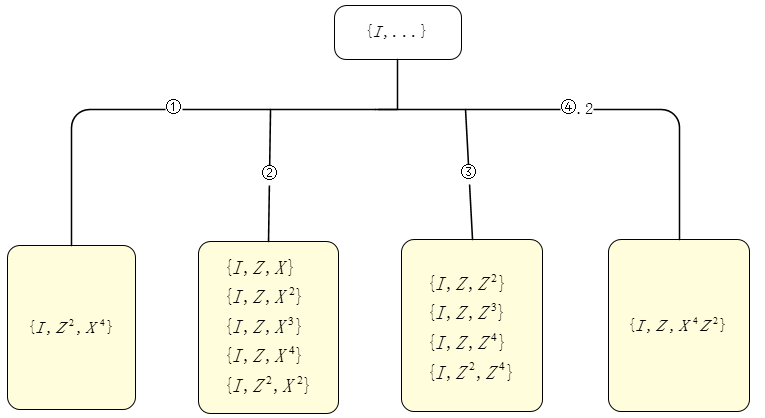}
\caption{Classification of GPM-triples in $8\otimes 8$ quantum system.}
\label{FigCD8}
\end{figure}
We list the classification of GPM-triples in $8\otimes 8$ quantum system to give an intuitive comprehension in figure \ref{FigCD8}. There are exactly 11 U-inequivalent GPM-triples, and all of triples in Figure \ref{FigCD8} are U-inequivalent via Table \ref{tableCD8}.

\begin{table}
\begin{center}
\caption{Invariants of the minimal triples in $8\otimes 8$ system. (All of the decimals are the approximate value.)}
\begin{tabular}{ c | c | c | c |c}
	\hline\hline
	Triples & $\mathbb{I}^{(1)}$  & $\mathbb{I}_{\mathcal{M},4}^{(2)}$  & $\mathbb{I}_{2,\mathcal{M}}^{(3)}$& $\mathbb{I}_{2,\mathcal{M}^{2}}^{(3)}$ \\ \hline
	$\{I,Z,X\}$ & 14.06 & 3 & 27  & 27 \\ \hline
	$\{I,Z,X^{2}\}$ & 48  & 5 & 27  & 39  \\ \hline
    $\{I,Z,X^{3}\}$& $81.94$ & 3 & 27  & 27  \\ \hline
    $\{I,Z,X^{4}\}$& 96  & 5 & 39  & 125  \\ \hline
    $\{I,Z^{2},X^{2}\}$& 96  & 9 & 27  & 63  \\ \hline
    $\{I,Z^{2},X^{4}\}$& 0 & 9 & 39  & 205  \\ \hline
    $\{I,Z,X^{4}Z^{2}\}$ & 96  & 5 & 27  & 55  \\ \hline
    $\{I,Z,Z^{2}\}$& 0 & 5 & 35  & 55 \\ \hline
    $\{I,Z,Z^{3}\}$&0 & 5 & 31  & 55  \\ \hline
    $\{I,Z,Z^{4}\}$ &0 &5 & 39  & 125 \\ \hline
    $\{I,Z^{2},Z^{4}\}$ & 0  & 9 & 55 & 205  \\ \hline\hline
\end{tabular}
\label{tableCD8}
\end{center}
\end{table}

We put the detailed proof of Theorem \ref{thm4} into Appendix \ref{appendix2}. In the following we give a rough analysis referring to the classification of $p^{2}\otimes p^{2}$ system. The preliminary of classification is similar to $p^{2}\otimes p^{2}$ system. Classes \textcircled{1};\textcircled{2};\textcircled{3};\textcircled{4} is somewhat like \textcircled{4};\textcircled{1};\textcircled{3},\textcircled{5};\textcircled{2} of Theorem \ref{thm3} respectively, while the analysis of classes \textcircled{4} is more intricate than \textcircled{2} of Theorem \ref{thm3}. More precisely, there are no corresponding cases in Theorem \ref{thm3} for \textcircled{4}. The appearance of case \textcircled{5} is aroused by the fact
\begin{small}
$$\mathcal{N}_{(k,s,t_1,\hat{t}_{2})}=\{I_{p^{s}}\otimes I_{p^{\alpha-s}},I_{p^{s}}\otimes Z_{p^{\alpha - s}}, X_{p^{s}}^{kp^{s+t-\alpha}}\otimes Z_{p^{\alpha-s}}^{\hat{t}_2}\}$$
\end{small}
where $\hat{t}_{2}$ is the minimal element of $[[ t_2]]_{p^{t_1-s}}, s+t\geq \alpha$. Thus there are less inequivalent classes in this case than case \textcircled{4}. We let $1\leq k\leq \frac{p^{\alpha -s-t_1}}{2}$ for $\mathcal{M}_{(k,s,t)}$ in case \textcircled{2} rather than $1\leq k \leq \frac{p^{\alpha - t}}{2}$ since $\mathcal{M}_{k,s,t}\sim\mathcal{M}_{k',s,t}$ for $k\equiv k'\text{ (mod }p^{\alpha - s - t})$ via Lemma \ref{permutation} by a little conversion, so as in class \textcircled{4}.

%
%


\section{\label{sec:level6}summary}
In summary, we find some more general unitary operators besides Clifford operators, which are indispensable for our classification. Meanwhile, we present a new method to prove that two GPM-triples are U-inequivalent. Based on these unitary operators and necessary conditions, we successfully classify the sets of GPMs in $d\otimes d$ quantum systems, including pairs in any dimensional quantum system and triples in a power of prime quantum system.

We find out in $d\otimes d$ quantum system, the U-inequivalent GPM-pairs are equal to the number of the factors of $d$ for any $d$, while the U-inequivalent GPM-triples are about $\frac{(\alpha + 3)}{6}p^{\alpha} + O(\alpha p^{\alpha-1})$ when $d$ is a power of prime. By the process of our classification, we conjecture that the U-inequivalent GPM-triples are polynomial of the product of $d$ and the factors of $d$ for general $d$. Furthermore, we wish that our results will provide some new thoughts for classification of quadruples or more in general quantum system.

%

\appendix

\section{\label{appendix3}Two permutations \eqref{mode-4} and \eqref{mode-6} of Theorem \ref{thm-inequivalence}}
The followings are two possible permutations of $\mathcal{N}_{k,t}$.

(a) $U_{1}\langle I,Z^{p^s},X^{kp^{t}}Z^{t'p^{s}}\rangle U_{2}\approx\langle X^{-kp^{t}}Z^{t'p^{s}},Z^{p^s},I\rangle$, where $k\perp p,s<t$, then we have
$$ Z^{p^s}\widesim{U_1} X^{kp^{t}}Z^{(1-t')p^{s}} \text{ and }X^{p^{t}}\widesim{U_1} X^{(1-t')p^{t}}Z^{t'(t'-2)p^{s}k^{-1}}.$$
If the above transformation holds, then $t'\equiv 2 \text{ (mod } p^{t-s})$.

 (b) $U_{1}\langle I,Z^{p^s},X^{kp^{t}}Z^{t'p^{s}}\rangle U_{2}\approx\langle I,X^{-kp^{t}}Z^{t'p^{s}},Z^{p^s}\rangle$, where $k\perp p,s<t$, then we have
$$ Z^{p^{s}}\widesim{U_{1}} X^{-kp^{t}}Z^{t'p^{s}}\text{ and } X^{p^{t}}\widesim{U_{1}} X^{t'p^{t}}Z^{(1-{t'}^{2})p^{s}k^{-1}}.$$

If the above transformation holds, then $t'\equiv p^{t-s}-1 (\text{mod } p^{t-s})$.

\section{\label{appendix1}Classification of GPM-triples in $p^2\otimes p^{2}$ quantum system}

In the following, we prove the GPM triples are inequivalent when the parameters are distinct for the inside of cases \textcircled{1},\textcircled{3},\textcircled{4} of Theorem \ref{thm3}.

\textbf{Cases \textcircled{1}}
Since $\mathbb{I}^{(1)}_{\mathcal{M}_{s}}=48(1-\cos{2s\pi/p^{2}})(1\leq s\leq\lfloor\frac{p^{2}}{2}\rfloor)$, which is distinct for each $s$ since $1-\cos x$ is monotonous in $(0,\pi)$ and greater than 0 strictly.


%

\textbf{Case \textcircled{3}}
By Tian \emph{et al.} ~\cite{Tian16}
Since $\mathcal{N}_{s}$ and $\mathcal{N}_{s'}$ are locally inequivalent if $s'\not\in[s]$ by ~\cite{Tian16} in prime dimension, which also holds when we adding $[kp]$ into it, since
 $$\mathbb{I}_{\mathcal{N}_{s'},s}^{(2)}< \mathbb{I}_{\mathcal{N}_{s},s}^{(2)},$$
where $s'$ is the element of $[kp]$.
Since $[kp]=\{kp,1-kp,(p-k)p,1-(p-k)p\}$ for $1\leq k \leq \frac{p-1}{2}$ and $[2]=\{2,\frac{p+1}{2},p-1\}$, let $A=[kp]\cup [2]$ and when $p=6k+5$, the rest equivalence class have exactly $6$ elements (by Lemma \ref{LemInv}), thus when $p=6k+5$,
$$p^{2}=6\sum_{[a]\not\in A}|[a]|+5+2(p-1)$$
Then there can be classified into $\frac{p^{2}-2p-3}{6}+\frac{p-1}{2}+1$ local equivalence classes.

When $p=6k+1(p\geq 5)$ there exists a pair $\{a_{0},1-a_{0}\}$ (by Lemma \ref{LemInv}), thus the number of inequivalent classes equals $\frac{p^{2}-2p+1}{6}+\frac{p-1}{2}+1$.

We can get the following result by combing them together,
$$\lfloor\frac{p^{2}-2p+1}{6}\rfloor+\lfloor\frac{p-1}{2}\rfloor+1$$
When $p=3$, there are two equivalence classes $[2]=\{2,5,8\}, $ and $[4]=\{3,4,6,7\}$, it also satisfies the above equality. When $p=2$, there is only one equivalence class $[2]=\{2,3\}$, the above equality also holds in this case.

\textbf{Case \textcircled{5}}
 Since
$$\mathbb{I}_{\mathcal{W}_{\bar{s}},s}^{(3)}\ne\mathbb{I}_{\mathcal{W}_{s'},s}^{(3)}$$
 where $\bar{s}$ is the element of $[s]_{p}$ and $s'\not\in [s]_{p}$. Thus the classification is minimum. We can shrink the domain from $\{1,\cdots,p^{2}\}$ into $\{1,\cdots,p\}$ and don't change the value of $p[k]_{p}$, thus there are eventually $\lfloor\frac{p}{6}\rfloor+1$ local inequivalence classes in total by \cite{Tian16}.

\section{\label{appendix2}The classification of GPM-triples in $p^{\alpha}\otimes p^{\alpha}$ quantum system}

In the following, we prove the classification of GPM-triples in $p^{\alpha}\otimes p^{\alpha}$ quantum system in Theorem \ref{thm4} are proper and complete.

\begin{proof}
Since $p^{s}(1\leq s<\alpha)$ is non-invertible in $\mathbb{Z}_{p^{\alpha}}$. We can classify all of GPM-triples into the following 3 cases primarily:

(a) $\{I,Z^{p^{s}},X^{kp^{t}}\}, \text{ where } 0\leq s,t < \alpha, k\perp p$.

(b) $\{I,Z^{p^{s}},X^{kp^{t_{1}}}Z^{k'p^{t_{2}}}\}, \text{ where } 0\leq s,t_{1},t_2\leq\alpha,$ and $k,k'\perp p $.

(c) $\{I,Z^{p^{s}},Z^{kp^{t}}\}, \text{ where } 0\leq s,t<\alpha \text{ and }k\perp p$.

We classify the above equivalence classes with the following equivalent relations, and then prove our classifications are minimum.

\begin{widetext}
\begin{flalign}
&\{I,Z^{p^{s}},X^{kp^{t}}\}\sim\{I,Z^{p^{t}},X^{-kp^{s}}\}\label{eq-a1}\\
&\{I,Z^{p^{s}},X^{kp^{t}}\}\mathop\sim\limits^{Q_{-1}}\{I,Z^{-p^{s}},X^{-kp^{t}}\}\sim\{I,Z^{-p^{s}},Z^{p^{s}}X^{-kp^{t}}\}\sim\{I,Z^{p^{s}},X^{-kp^{t}}\},\text{ where } s\geq t\label{eq-a2}\\
&\{I,Z^{p^{s}},X^{p^{t}}\}=\{I_{p^{s}}\otimes I_{p^{\alpha - s}},I_{p^{s}}\otimes Z_{p^{\alpha - s}},X_{p^{s}}^{p^{s+t-\alpha}}\otimes I_{p^{\alpha-s}}\}\sim \{I,Z^{p^{s}},X^{kp^{t}}\}, \text{ where } s+t\geq \alpha.\label{eq-a3}\\
&\{I,Z^{p^{s}},X^{kp^{t}}Z^{t'p^{s}}\}\sim \begin{cases}
&\{I,Z^{p^{s}},X^{-kp^{t}}Z^{(1-t')p^{s}}\}\\
&\{I,Z^{p^{s}},X^{-kp^{t}}Z^{t'^{-1}p^{s}}\}\text{, where } t'\perp p.\\
&\{I,Z^{p^{s}},X^{kp^{t}}Z^{(1-t')^{-1}p^{s}}\}\text {, where } 1-t \perp p.
\end{cases}\label{eq-c1}\\
&\{I,Z^{p^{s}},X^{t}Z^{t'}\}\sim \{I,Z^{p^{s}},X^{t}\}\sim\{I,Z,X^{-p^{s}t^{-1}}\}
\text{, where } t\perp p\text{ or } t|p \text{ and } t'|p.\label{eq-c2}\\
&\{I,Z^{p^{s}},X^{kp^{t}}Z^{k'p^{t'}}\}\sim\{I,Z^{p^{t'}},X^{-kp^{(t-t'+s)}}Z^{k'^{-1}p^{s}}\} \text{, where } t'<t.\label{eq-c3}\\
&\{I,Z^{p^{s}},X^{kp^{t}}Z^{k'p^{t'}}\}\widesim{P^{k'k^{-1}p^{t'-t}}} \{I,Z^{p^{s}},X^{kp^{t}}\},\text{ where }t'\geq t.\label{eq-c4}\\
&\{I,Z^{p^{s}},X^{kp^{t}}Z^{p^{s}}\}\mathop\sim\limits^{Q_{-1}}\{I,Z^{-p^{s}},X^{-kp^{t}}Z^{-p^{s}}\}\sim \{I,Z^{p^{s}},X^{kp^{t}}\}, \label{eq-c5}\\
&\{I,Z^{p^{s}},X^{kp^{t}}Z^{p^st_2}\}=\{I_{p^{s}}\otimes I_{p^{\alpha-s}},I_{p^{s}}\otimes Z_{p^{\alpha - s}}, X_{p^{s}}^{kp^{s+t-\alpha}}\otimes Z_{p^{\alpha-s}}^{t_2}\}\sim\{I,Z^{p^{s}},X^{p^{t}}Z^{p^st_2}\}, \text{where } s+t\geq \alpha \label{eq-c6}\\
&\{I,Z^{p^{s}},Z^{kp^{t}}\}\sim\{I,Z^{p^{t}},Z^{k^{-1}p^{s}}\}\label{eq-b1}\\
&\{I,Z^{p^{s}},Z^{p^{s}t}\}\sim\begin{cases}
&\{I,Z^{p^{s}},Z^{p^{s}(1-t')^{-1}}\},\text{ where }(1-t)\perp p\\
&\{I,Z^{p^{s}},Z^{p^{s}(1-t^{-1}})\},\text{ where } t\perp p.
\end{cases}\label{eq-b2}
\end{flalign}
\end{widetext}

By Equation \eqref{eq-a1}, we can restrict $0\leq s\leq t<\alpha$ in case (a). Now we split case (a) into two parts, $s+t\geq \alpha$ and $s+t<\alpha$. When $s+t\geq \alpha, \{I,Z^{p^{s}},X^{kp^{t}}\}$ are equivalent for different $k$ by Equation \eqref{eq-a3}. Thus we get equivalence classes \textcircled{1} in Theorem \ref{thm4}. Now we prove the classes in \textcircled{1} are minimum.

$\mathcal{M}_{{(s,t)}}$ are U-inequivalent for different $s,t$, since $\mathbb{I}_{\mathcal{M}_{{(s,t)}},\alpha - s}^{(2)}=9$, while $\mathbb{I}_{\mathcal{M}_{(s',t')},\alpha - s}^{(2)}\leq 5$ for $s>s'$. Meanwhile, $\mathbb{I}_{\mathcal{M}_{{(s,t)}},\alpha - t}^{(2)}=5$, while $\mathbb{I}_{\mathcal{M}_{(s,t')},\alpha - t}^{(2)}=3$ for $t>t'$. Therefore, $\mathcal{M}_{(s,t)}$ are U-inequivalent for different $s,t$.

Since whether $\alpha$ is odd or even has little influence to the order of quantity of U-inequivalent classes, thus we suppose $\alpha$ is even. Thus there are exactly $\frac{\alpha^{2}}{4}$ U-inequivalent classes in \textcircled{1} of Theorem \ref{thm4}.

When $s+t<\alpha$, we can additional restrict $k\leq \frac{p^{\alpha-t}}{2}$ for $\mathcal{M}_{(k,s,t)}$ by equivalence transformations \eqref{eq-a1},\eqref{eq-a2}, thus we get classes $\mathcal{M}_{(k,s,t)}=\{I,Z^{p^{s}},X^{kp^{t}}\}$, where $0\leq s\leq t<\alpha, s+t<\alpha, 1\leq k\leq \frac{p^{\alpha -t}}{2}$ and $k\perp p$.

In the following, we prove $\mathcal{M}_{(k,s,t)}\sim \mathcal{M}_{(k',s,t)}$ for $k\equiv k'\text{ (mod }p^{\alpha-s-t})$.

Since $X^{kp^{t}}\sim X^{k'p^{t}}$ for $k,k'$ are invertible if $k\equiv k'\text{ (mod }p^{\alpha - s - t})$ and $X^{p^{t}}\sim X^{k_1p^{\alpha - s}+p^{t}}$, where $k_1$ satisfies $kk_1p^{\alpha - s - t}=k-k'$. Then by Lemma \ref{conjugate transformation}, $\mathcal{M}_{(k,s,t)}\sim \mathcal{M}_{(k',s,t)}$ for $k\equiv k'\text{ (mod }p^{\alpha-s-t})$. Thus we get classes \textcircled{2} of Theorem \ref{thm4}.

In the following we prove the classes in \textcircled{2} are minimal. Consider two GPM-triples $\mathcal{M}_{(k,s,t)}$ and $\mathcal{M}_{(k',s',t')}$. We find that $\mathcal{M}_{(k,s,t)}$ are U-inequivalent for different $s,t$ with analysis similar to the above case $s+t\geq \alpha$. For different $k$, since $\mathbb{I}_{\mathcal{M}_{(k,s,t)}}^{(1)}=48(1-\cos\frac{2kp^{(s+t)}\pi}{p^{\alpha}})$.
If $s=s',t = t'$ and $k\not\equiv k'\text{ (mod }p^{\alpha-s-t})$, their $\mathbb{I}^{(1)}$ are distinct.

There are exactly $\frac{2p^{\alpha + 3} - \alpha p^{3} - 2p^{3} + \alpha p}{4(p-1)^{2}(p+1)}$ U-inequivalent classes in \textcircled{2} of Theorem \ref{thm4} when $\alpha$ is even.

In case (b), by equivalent transformations \eqref{eq-c2},\eqref{eq-c3},\eqref{eq-c4}, we can restrict $t_{2}<t_{1},s<t_{1}$ for case (b). Let $k'p^{t_2}=p^{s}t'$ for $t'=k'p^{t_2-s}$, by equivalence transformation \eqref{eq-c5}, we can restrict $t_2>1$, together with equivalence transformation \eqref{eq-c1}, we can restrict case (b) into classes $\mathcal{N}_{(k,s,t_1,\hat{t}_2)}$, where $\hat{t}_2$ is the minimal element of $[[ t_2]]_{p^{t_1-s}}$. Fianlly, we get classes
$$\mathcal{N}_{(k,s,t_1,\hat{t}_2)}$$
 where $0\leq s<t_1<\alpha, s+t_1<\alpha,1\leq k\leq \frac{p^{\alpha - t_1}}{2}, 2\leq t_2<p^{t_1-s}, k\perp p$, $\hat{t}_2$ is the minimal element of $[[ t_2]]_{p^{t_1-s}}$, and classes
 $$\mathcal{N}_{(s,t_1,\hat{t}_2')}$$
 where $0\leq s<t_1<\alpha,s+t_1\geq \alpha, 2\leq t_2<p^{t_1 -s}$, and $\hat{t}_2'$ is the minimal element of $[t_2]_{p^{t_1-s}}$ by equivalent transformation \eqref{eq-c6} and some obvious analysis. Meanwhile we can restrict $k$ to be between 1 and $\frac{p^{\alpha -s- t_1}}{2}$ for classes $\mathcal{N}_{(k,s,t_1,\hat{t}_2)}$ where $\hat{t}_2$ is the minimal element of $[[ t_2]]_{p^{t_1-s}}$ by Lemma \ref{conjugate transformation}, thus obtain case \textcircled{4} of Theorem \ref{thm4}.

We show that $\mathcal{N}_{(k,s,t_1,\hat{t}_2)}$ are U-inequivalent for different $s,t_{1}$, where $\hat{t}_2$ is the minimal element of $[[ t_2]]_{p^{t_1-s}}$, $s,t<\alpha$ and $k\perp p$.

\begin{itemize}
\item $\mathbb{I}_{\mathcal{N}_{(k,s,t,t_2)},\alpha - s}^{(2)}>\mathbb{I}_{\mathcal{N}_{(k',s',t',t'_2)},\alpha - s}^{(2)}$ for $s>s'$.
\item  $\mathbb{I}_{t_2,\mathcal{N}_{(k,s,t,t_2)}^{\alpha - t}}^{(3)}\ne \mathbb{I}_{t_2,\mathcal{N}_{(k',s,t',t'_2)}^{\alpha - t}}^{(3)}$ for $t>t'$.
\item $\mathbb{I}_{\mathcal{N}_{(k,s,t,t_2)}}^{(1)}=48(1-\cos\frac{2kp^{(s+t)}\pi}{p^{\alpha}})$, thus $\mathcal{N}_{(k,s,t,t_2)}\not\sim \mathcal{N}_{(k',s,t,t'_2)}$ for different $k,k'\in p^{\alpha -s -t}$ and $s+t<\alpha$.
\item $\mathbb{I}_{t_2,\mathcal{N}_{(k,s,t,\bar{t}_2)}^{\alpha - t}}^{(3)}\ne \mathbb{I}_{t_2,\mathcal{N}_{(k,s,t,t')}^{\alpha - t}}^{(3)}$ for $t'\not\in[t_2]_{p^{t_1-s}}$ and $\bar{t}_2$ is the element of $[t_2]_{p^{t_1-s}}$.
\end{itemize}
Furthermore, by Theorem \ref{thm-inequivalence}, $\mathcal{N}_{(k,s,t,t_2)}\not\sim \mathcal{N}_{(-k,s,t,t_2)}$ when $t_2\not\in [2]_{p^{t_1-s}}$, and $p\ne 2$ or $s+t+1\ne\alpha$. Thus \textcircled{4} and \textcircled{5} are minimum. When $p=2$ and $s+t+1=\alpha$, $\mathcal{N}_{(k,s,t,t_2)}\sim\mathcal{N}_{(-k,s,t,t_2)}$ by Lemma \ref{conjugate transformation} since $k\equiv -k\text{ (mod }p^{\alpha - s - t})$.

There are exactly
$$\frac{(2\alpha - 2)p^{\alpha + 2} - (2+2\alpha)p^{\alpha} - 6p^{\alpha +1}+(3\alpha + 2)p^{2} + 6p - 3\alpha +2}{12(p^{2}-1)}$$
 U-inequivalent classes in \textcircled{4},
 $$\frac{2p^{\alpha}-\alpha p^{2}  +\alpha -2}{12(p-1)^{2}}$$
  U-inequivalent classes in \textcircled{5} of Theorem \ref{thm4} when $p=6k+5$ or $p = 3$, and
  $$\frac{(2\alpha -2)p^{\alpha + 2} + 2p^{\alpha + 1} -(2+2\alpha)p^{\alpha} +(2-\alpha)p^{2}-2p + \alpha +2}{12(p^{2}-1)}$$
 U-equivalent classes in \textcircled{4} ,
 $$\frac{2p^{\alpha} -\alpha p^{2} + \alpha -2}{12(p-1)^{2}} + \frac{\alpha(\alpha -2)}{6}$$
  U-inequivalent classes in \textcircled{5} of Theorem \ref{thm4} if $p=6k+1$.

  In the other hand, when $p= 2$. There are two parts of \textcircled{4}, and the U-inequivalent classes are respectively:
  $$\frac{(3\alpha - 8)2^{\alpha} - 18\alpha + 44}{18},$$
  and
  $$\frac{2^{\alpha} + 3\alpha -10}{6},$$
  The U-inequivalent classes in \textcircled{5} are
  $$\frac{2^{\alpha + 1} + 3\alpha^{2} - 24\alpha + 28}{12}$$

 By equivalent transformation \eqref{eq-b1} we can restrict $s<t$ of case (c). Let $kp^{t}=t'p^{s}$ for $t'=kp^{t-s}$, then we get case \textcircled{3} in Theorem \ref{thm4} by classification of case (c) via equivalent transformation \eqref{eq-b2}.

It's easy to find that for different $s,s', \mathcal{N}_{(s,t)}\not\sim \mathcal{N}_{(s',t')}$. On the other hand, $\mathbb{I}_{t,\mathcal{N}_{(s,\bar{t})}}^{(3)}\ne\mathbb{I}_{t,\mathcal{N}_{(s,t')}}^{(3)}$ for $t'\not\in [t]_{p^{\alpha-s}}$, where $\bar{t}$ is the element of $[t]_{p^{\alpha-s}}$, thus \textcircled{3} are minimum.

There are exactly $\frac{(p+1)(p^{\alpha }-1)}{6(p-1)}$ U-inequivalent classes when $p=6k+5$ or $p = 3$, and $\frac{(p+1)(p^{\alpha }-1)}{6(p-1)} + \frac{2\alpha}{3}$ U-equivalent classes when $p=6k+1$ in \textcircled{3} of Theorem \ref{thm4} (The quantity is equal to $2^{\alpha - 1} + \alpha -3$ when $p = 2$). Thus, there are $\frac{(\alpha + 3)}{6}p^{\alpha} + O(\alpha p^{\alpha-1})$ U-inequivalent classes in summation when $p\geq 3$. In the other hand, when $p = 2$, the U-inequivalent classes equals
$$\frac{(3\alpha + 19)2^{\alpha}}{18} + \frac{\alpha^{2}}{2} -\frac{7\alpha}{4}-\frac{5}{9}$$
for even $\alpha$ and $\alpha>2$. Thus it is easy to check the quantity $\frac{(\alpha + 3)}{6}p^{\alpha} + O(\alpha p^{\alpha-1})$ also holds for $p = 2$.

In the following, we prove that all the cases of Theorem \ref{thm4} are inequivalent.

\begin{itemize}
\item \textcircled{1} and \textcircled{2} are unitary inequivalent, since all of triples in \textcircled{1} are commute respectively, while all of triples in \textcircled{2} are non-commute respectively.
\item \textcircled{1},\textcircled{2} and \textcircled{4},\textcircled{5} are unitary inequivalent, since
$$\mathbb{I}_{\mathcal{M}_{(k,s,t)},p^{\alpha -s}}^{(2)}>\mathbb{I}_{\mathcal{N}_{(k',s',t',t_2)},p^{\alpha - s}}$$
where $s>s'$ and when $s<s' $ or $t\geq t',\mathcal{M}_{(k,s,t)}\not\sim\mathcal{N}_{(k',s',t',t_2)}$ in the same way. On the other hand,
$$\mathbb{I}_{t_2,\mathcal{M}_{(k,s,t)}^{p^{\alpha - t'}}}^{(3)}\ne\mathbb{I}_{t_2,\mathcal{N}_{(k',s,t,t_2)}^{p^{\alpha - t'}}}^{(3)}$$
where $t<t'$.
\item  \textcircled{3} are unitary inequivalent to \textcircled{1},\textcircled{2} since
$$\mathbb{I}_{t,\mathcal{N}_{(s,\bar{t})}}^{(3)}\ne\mathbb{I}_{t,\mathcal{M}_{k,s',t'}}^{(3)}$$
where $\bar{t}$ is the element of $[[ t]]_{p^{\alpha - s}}$, meanwhile, \textcircled{3} are unitary inequivalent to \textcircled{4},\textcircled{5} in the same way.
\item The classifications of \textcircled{4} and \textcircled{5} also imply that \textcircled{4} and \textcircled{5} are unitary inequivalent.

\end{itemize}
Thus the classification of Theorem \ref{thm4} are complete.
\end{proof}

\bibliographystyle{apsrev4-1}
\bibliography{bibtex}

\end{document}